\documentclass[journal]{IEEEtran}

\usepackage{algorithmic}
\usepackage{algorithm}
\usepackage{array}

\usepackage{subfigure}
\usepackage{amsmath,amsfonts}

\usepackage{textcomp}
\usepackage{stfloats}
\usepackage{url}
\usepackage{verbatim}
\usepackage{graphicx}
\usepackage{cite}
\usepackage{bm}
\usepackage{soul}
\usepackage{arydshln}
\usepackage{booktabs}
\usepackage{pifont}
\usepackage{amssymb}
\usepackage{tabularx}
\usepackage{microtype}
\usepackage{stfloats}
\usepackage{epstopdf}
\usepackage{makecell}
\usepackage{diagbox}

\usepackage[implicit=false]{hyperref}
\usepackage{tikz,xcolor}

\newtheorem{proof}{Proof}

\newtheorem{remark}{Remark}

\newtheorem{proposition}{Proposition}
\newtheorem{define}{Definition}

\begin{document}

\title{SAR/ISAR Imaging in 6G Network}
\author{Yanmo Hu,~\IEEEmembership{Member,~IEEE}, Shuowen Zhang,~\IEEEmembership{Senior~Member,~IEEE},\\ Ross Murch,~\IEEEmembership{Fellow,~IEEE}, Liang Liu,~\IEEEmembership{Fellow,~IEEE}
\vspace{-0.25em}
\thanks{
This paper was presented in part at the 2026 IEEE International Conference on Communications (ICC) \cite{yanmo_conference}.

Y. Hu, S. Zhang, and L. Liu
are with the
Department of Electrical and Electronic Engineering, The Hong Kong Polytechnic
University, Hong Kong, SAR
(e-mail: \{yanmo.hu, shuowen.zhang, liang-eie.liu\}@polyu.edu.hk).

R. Murch is with the Department of Electronic and Computer Engineering, The Hong Kong University of Science and Technology, Hong Kong, SAR
(e-mail: {eermurch}@ust.hk).

}}
\maketitle

\begin{abstract}

Imaging is a crucial sensing function that finds wide applications
in environmental reconstruction, autonomous driving, etc.
However,
the signal processing methods for
existing radio imaging techniques,
such as millimeter wave (mmWave) imaging,
require high-resolution range estimation
enabled by Gigahertz-level or even Terahertz-level bandwidth,
and cannot be applied
in 6G integrated sensing and communication (ISAC) network
with Megahertz-level bandwidth.
This paper proposes
\mbox{two} novel high-resolution
radio
imaging schemes
that can work on the
6G signals
with limited bandwidth -
bandwidth-independent
synthetic aperture radar (BI-SAR),
where the movable base station (BS)
revolves along the static targets by 360 degrees;
as well as
bandwidth-independent inverse synthetic aperture radar (BI-ISAR),
where the BS is static
and the targets revolve along an axis by 360 degrees.
Different from conventional SAR and ISAR counterparts that rely on range estimation,
our proposed imaging schemes
solely utilize Doppler information
to perform imaging
without any range information.
The main technical challenge of our schemes lies in the anisotropic scattering functions over different directions,
which
hinder the coherent synthesis of the backscattered signals from all directions.
We design
an
iterative adaptive approach-based Doppler association (IAA-DA) algorithm
to tackle the above issue.
Moreover,
we also derive the imaging resolution to characterize the reconstruction quality.
Real-world experiments are provided to show the feasibility and the effectiveness of our proposed 6G imaging schemes.

\end{abstract}

\begin{IEEEkeywords}
Integrated sensing and communications (ISAC),
bandwidth-independent imaging,
synthetic aperture radar (SAR),
inverse synthetic aperture radar (ISAR).
\end{IEEEkeywords}

\section{Introduction}\label{Section_Introduction}

\subsection{Motivations and Backgrounds}

\IEEEPARstart{I}{ntegrated}
sensing and communication (ISAC)
has been identified as a main usage scenario for
sixth-generation (6G) applications
and has attracted much attention
\cite{10663814,10878492, 11079818}.
In the literature of ISAC,
valuable works have been performed
to study how to leverage
the communication signals for performing localization
\cite{9724258, 11288092, 10948152, 10637442}.
On the other hand,
imaging,
which needs to estimate the accurate positions of all the points on the target,
is a more challenging task compared to localization.
Because radio imaging has potential in plenty of applications such as environment reconstruction,
it is crucial to investigate the potential of achieving sensing function in future 6G networks.

\begin{figure*}[!t]
\centering
\subfigure[]
{\includegraphics[width=3in]{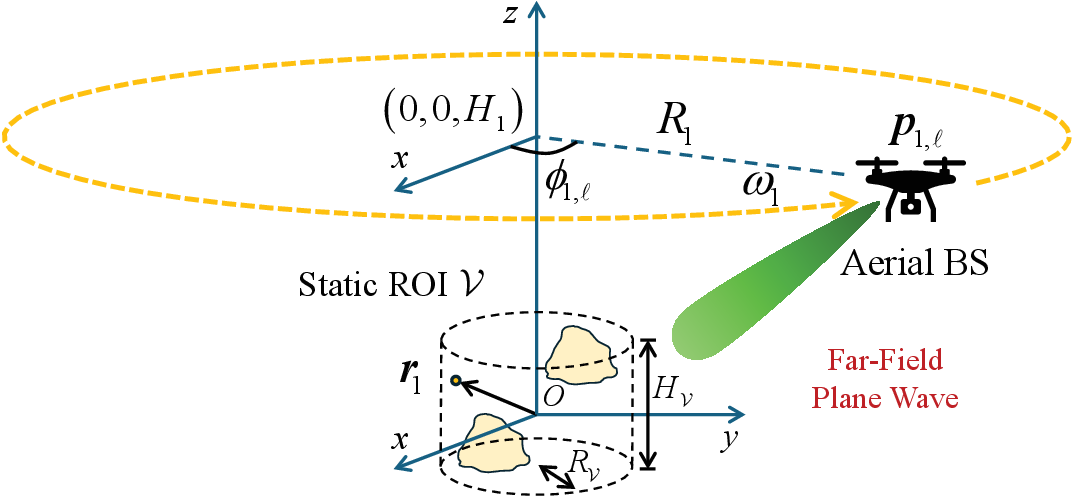}} \ \ \ \ \ \ \ \ \ \ \ \ \
\subfigure[]
{\includegraphics[width=2.9in]{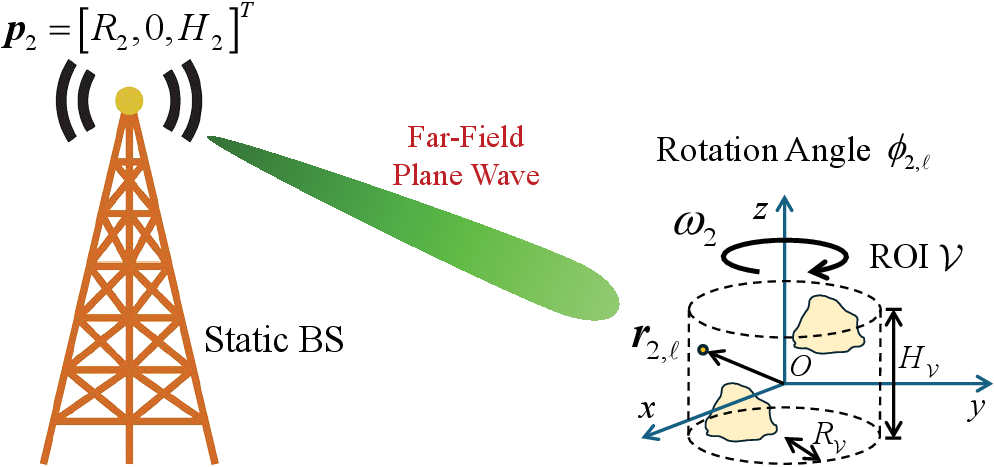}}
\caption{System model of our considered imaging architecture.
(a) BI-SAR imaging mode. (b) BI-ISAR imaging mode.}
\label{Fig_scenario}
\end{figure*}

For scenarios with static anchors and targets,
a fundamental imaging approach is to estimate the range and angle of each grid point on the target \cite{6153063}.
Under this approach, achieving high-resolution imaging requires extremely large bandwidth for accurate range estimation. For example, the bandwidth of the millimeter wave (mmWave) real-time imaging for concealed weapon detection in \cite{942570} is
6 GHz.
However,
such bandwidth is not available in current cellular network.
For example,
the maximal bandwidth in current 5G systems
is 400 MHz
at the mmWave and 100 MHz at the sub-6 GHz band based on \cite{10878492}.
Such bandwidth leads to range resolution in the order of meters and cannot support
high-resolution imaging.
In the literature,
several multi-view schemes have been proposed to achieve bandwidth-independent (BI) imaging without range estimation.
In \cite{gao2025covariancebasedimagingmultiviewfusion},
each BS utilizes multiple antennas to estimate the angle information of each point of the target,
and each point can then be localized based on its angles to several BSs.
Another interesting multi-view imaging scheme
is radio tomographic imaging (RTI) \cite{5374407, 10367810, 10618967, 10947014}.
Leveraging the basic idea from
computed tomography (CT),
RTI
deploys multiple imaging nodes around the region of interest (ROI) \cite{5374407}.
By utilizing attenuation data,
RTI can achieve imaging tasks using minimal resources,
i.e.,
single-frequency, single-antenna, and single-pulse configuration in each imaging node.
However,
the above BI multi-view imaging schemes require a large number of cooperative anchors to fuse their sensing data.
In cellular network,
BSs are typically hundreds of meters away from each other,
and it is hard to receive the echo signals of the targets from many BSs for multi-view imaging.
This gives rise to the problem on how to perform BI imaging using one BS in 6G networks.

With one BS, we can utilize the synthetic aperture radar (SAR) \cite{6504845}
and inverse synthetic aperture radar (ISAR) \cite{4502060} techniques for imaging.
SAR and ISAR have been widely used in radar imaging.
They generate 2D images by using bandwidth
for range resolution
and leveraging Doppler to synthesize a virtual array for cross-range resolution.
Owing to the differences between SAR and ISAR imaging modes,
the strategies for cross-range aperture synthesis are different.
Under SAR imaging,
the targets are static,
and the anchor moves along a trajectory, such that
a virtual array is actively synthesized based on the anchor's movement.
For example,
in conventional stripmap SAR \cite{4373378},
which is the most used linear SAR mode,
the radar moves along a straight path
to synthesize a linear 1D virtual array.
Under ISAR imaging,
the virtual array is passively synthesized by exploiting
the target's own motion.
For example,
in conventional ISAR configurations \cite{532283, 805442},
with small-angle rotation and far-field conditions,
a linear virtual array is formed via Doppler processing.
However, conventional SAR and ISAR schemes cannot directly be
used in
6G cellular networks,
as they rely on wide bandwidth for high-resolution range estimation.
Typical examples include the 3 GHz bandwidth used for through-wall SAR imaging \cite{4497843}, the 1.2 GHz bandwidth for mine-field SAR detection \cite{755021},
and the 1 GHz bandwidth utilized in ISAR systems for noncooperative glider targets \cite{10412121}.
This gives rise to the problem on whether we can modify SAR and ISAR imaging schemes such that they can work without range estimation.

Under the conventional SAR and ISAR imaging schemes,
a virtual linear 1D array is
synthesized via Doppler processing,
thanks to the small-scale movement of the target or the anchor.
According to the array signal processing principle,
a linear 1D array can provide accurate positioning in one dimension \cite{ArraySignalProcessing}.
To generate 2D images,
the conventional SAR and ISAR schemes thus require extremely large bandwidth for range estimation in the second dimension.
To achieve bandwidth-independent imaging, a natural idea is to virtually form a 2D array,
which provides two-dimensional resolution,
for 2D imaging.
Although this requires large-scale, instead of small-scale, movement of the target or the anchor,
this makes it possible to utilize solely the Doppler information, \emph{without any range information}, to perform 2D imaging.
Moreover, for a 2D array,
spatial resolution improves as the angular span increases \cite{317861}.
A full 360-degree view surrounding the target
provides a complete 2D aperture and is able to maximize the resolution.
Such an idea was used in \cite{5374407, 10367810, 10618967, 10947014}.
This motivates us to investigate
the possibility of making 360-degree movement of the target or the BS to achieve BI imaging in 6G cellular network
solely based on the Doppler processing.

\subsection{Contributions}

In this paper, we investigate a fundamental research question:
\emph{Can high-resolution imaging be achieved for targets
when merely the Doppler
information is utilized in 6G systems?}
To answer the above
question,
we propose two novel bandwidth-independent (BI)
imaging techniques in 6G ISAC systems.
The main idea of the BI imaging techniques
is to form a virtual 2D antenna array in
SAR and ISAR modes.
Such schemes perform 2D imaging
\emph{solely based on the Doppler processing}
without requiring any resolution information from bandwidth.
The contributions of this paper are
summarized as follows.
\begin{itemize}

\item
Based on the existing SAR and ISAR imaging schemes, we propose two improved imaging solutions, namely BI-SAR imaging and BI-ISAR imaging. Specifically, as shown in Fig. \ref{Fig_scenario} (a),
under the BI-SAR imaging architecture,
the targets are static, while the BS revolves around
the targets with a fixed height.
In practice, a movable BS, e.g., a UAV or a robot that is equipped
with both the transmit and receive antennas,
can be leveraged to achieve this goal.
Under the BI-ISAR imaging architecture,
the BS is static, while the targets revolve together with a fixed height,
as shown in Fig. \ref{Fig_scenario} (b).
Under our proposed BI-SAR and BI-ISAR schemes, a 2D circular array with 360-degree view is synthesized such that theoretically speaking, merely Doppler processing is sufficient for 2D imaging.

\item
Under our proposed BI-SAR and BI-ISAR schemes,
we propose a complete signal processing algorithm that can fuse the observations from various angles to 2D high-resolution images. The biggest challenge is the anisotropic scattering function,
i.e.,
the scattering energy and phase of any point on the target are different in different directions.
This is quite different from the existing multi-view imaging works
\cite{9534682}, where the scattering function is isotropic such that
any point on the target has a unique scattering value across all directions.
To tackle the challenge caused by anisotropy,
an iterative adaptive approach-based Doppler association (IAA-DA) algorithm
is proposed.
Instead of estimating the scattering value for each grid in each direction,
the proposed algorithm
successfully collects the backscattered signal energy
from all observation angles at each grid
to reconstruct
a 2D map of the total scattering energy within the ROI.

\item
After the signal processing imaging algorithm is proposed,
we also provide theoretical analysis about our BI-SAR and BI-ISAR schemes.
Similar to the relation between bandwidth and range estimation resolution,
we theoretically build the relation between signal frequency and imaging resolution under our schemes.
This provides theoretical guidance to BI imaging in 6G networks.

\item

We conduct real-world experiments to implement our proposed BI-SAR and BI-ISAR schemes on a mmWave ISAC platform.
Experimental results
demonstrate
the superiority and the robustness of the proposed scheme
and
the capability of achieving a \emph{centimeter-level} imaging resolution
using only the Doppler frequency.
To the best of our knowledge, this is the first study to achieve high-resolution imaging using cellular signals in practice.

\end{itemize}

It is worth noting that the idea of creating a 2D virtual array for BI imaging has been studied in the literature before.
Specifically,
under the assumption of isotropic scattering function over all directions,
\cite{4330we23965}
and
\cite{752218}
investigated the BI-ISAR scheme.
However,
their schemes cannot be applied to real systems because targets are usually of complicated shape and with anisotropic scattering function.
On the other hand,
the above idea was utilized in \cite{10475383},
where a user moves along a continuous trajectory
to image the target within the ROI at a reconfigurable intelligent surface (RIS) aided system.
In particular,
\cite{10475383} further considered the practical setup with anisotropic scattering function.
However,
this approach requires multiple measurements
at each user position facilitated by an RIS,
thereby increasing the system complexity.

The rest of this paper is organized as follows.
Section \ref{Section_Signal_Model}
introduces the system models of our proposed BI-SAR and BI-ISAR imaging schemes.
Section \ref{Section_sldfwefbksjdlkfj}
simplifies the signal model via tight approximation.
The IAA-DA imaging algorithm
is proposed in
Section \ref{Section_sdlfjksldkfjlk}.
Section \ref{Section_Algorithm_Performance}
analyzes the imaging resolution of our imaging scheme.
In Section \ref{Section_large_size},
the proposed algorithm is extended to general cases with large-sized ROI.
Experiment results are shown in Section \ref{Section_simulation_experiment},
followed by conclusions in Section \ref{Section_conclusion}.

\section{System Model}\label{Section_Signal_Model}

In this paper,
we consider
the utilization of
communication signals for wireless
imaging in a 6G system,
which consists of one BS and multiple targets,
as shown in Fig. \ref{Fig_scenario}.
All the targets are within a 3D ROI,
which is assumed to be a cylinder
with radius of $R_\mathcal{V}$ meters (m),
height of $H_\mathcal{V}$ m,
and circular ends perpendicular to the $z$-axis.
Moreover, we define the center point of the ROI as the origin point
in our imaging system.
For convenience, denote $\mathcal{V}$ as the above ROI, which consists of all the 3D points in the cylinder.
Additionally,
the BS is equipped with one transmit antenna and one receive antenna.
In the downlink, the BS emits $L$ OFDM symbols to the wireless environment.
Let $N$ and $\Delta f$ Hz denote
the number of sub-carriers and the sub-carrier
spacing of the downlink OFDM signals,
respectively.
Then, the channel bandwidth is $B = N \Delta f$ Hz
and
the length of an OFDM symbol is
$T_0 = {1}/{\Delta f}$ seconds (s).
Furthermore,
a cyclic prefix (CP) with duration of $T_c$ s
is added at the beginning of each OFDM symbol.
Therefore, the duration of each OFDM symbol block
is $\bar{T} = T_0 + T_c$ s,
and the $\ell$-th OFDM symbol block spans the time $\left[\ell {\bar{T}}-{{T}_{c}}, \ell{\bar{T}}+T_0\right]$.
The transmitted signal at time $t$
of the $\ell$-th OFDM symbol block
is given by
\begin{align}
\nonumber {{u}_{\ell }}\left( t \right)={}&{{e}^{j2\pi {{f}_{0}}t}}\sum\limits_{n=0}^{N-1}{{{c}_{n,\ell }}{{e}^{j2\pi n\Delta ft}}}\text{rect}\left( t-\ell {\bar{T}} \right), \\
& \forall t\in \left[\ell {\bar{T}}-{{T}_{c}}, \ell{\bar{T}}+T_0\right], \ \forall \ell = 0,\cdots L-1,
\end{align}
where
$f_0$ denotes the carrier frequency
of the first sub-carrier,
$c_{n, \ell}\in \mathbb{C}$ denotes
the data sample transmitted at the $n$-th sub-carrier
of the
$\ell$-th OFDM symbol,
and
$\text{rect}\left(t\right)=1$
when $t \in \left[-T_c, T_0\right]$
and
$\text{rect}\left(t\right)=0$
otherwise.

The above signals will be reflected by the targets.
Then,
our goal is
to image the targets in the ROI based on their echo signals
received by the BS.
In the current 5G systems,
the maximum bandwidth at FR1 (Sub-6GHz)
and that at FR2 (mmWave)
are 100 MHz and 400 MHz,
leading to range resolution of $1.5\text{m}$ and $0.375\text{m}$, respectively. Therefore, if we adopt the existing imaging methods \cite{4502060, 6504845}, high-resolution imaging is not possible in cellular systems.
In this paper,
we consider two imaging modes that do not rely on range estimation to enable imaging with 6G signals:
BI-SAR imaging mode and BI-ISAR imaging mode.
Specifically, under the BI-SAR imaging architecture,
the targets are static, while the BS revolves around
the targets with a fixed height, as shown in Fig. \ref{Fig_scenario}(a).
In practice, a movable BS, e.g., a UAV or a robot that is equipped
with both the transmit and receive antennas,
can be leveraged to achieve this goal.
Under the BI-ISAR imaging architecture,
the BS is static, while the targets revolve together with a fixed height,
as shown in Fig. \ref{Fig_scenario}(b).
Under both BI-SAR mode and BI-ISAR mode,
we aim to achieve high-resolution imaging of the targets just based on the Doppler information arising from either moving targets or moving BS.
Moreover,
this paper mainly focuses on a challenging imaging setup, where
the distance between
any two points in the ROI, denoted by $\boldsymbol{r}_A$ and $\boldsymbol{r}_B$,
is smaller than the range resolution,
i.e.,
\begin{align}\label{welikjrwlekjr}
{{\left\| {{\boldsymbol{r}}_{A}}-{{\boldsymbol{r}}_{B}} \right\|}_{2}}<\frac{c}{2B}, \ \forall {{\boldsymbol{r}}_{A}},{{\boldsymbol{r}}_{B}} \in \mathcal{V}. \footnotemark
\end{align}
\footnotetext{To make our scheme complete, in Section \ref{Section_large_size}, we will briefly show how to perform imaging when (\ref{welikjrwlekjr}) does not hold.}
$\!\!$In the following, we show
the signal models for both imaging modes,
respectively.

\subsection{Signal Model for BI-SAR Imaging Mode}\label{Section_signal_model_scenario1}

Under the BI-SAR imaging architecture,
as shown in Fig. \ref{Fig_scenario}(a),
we assume that the aerial BS
revolves around the targets by $360{}^\circ$ at a fixed height of $H_1$ m.
More precisely, the trajectory of the aerial BS is a circle
perpendicular to $z$-axis, whose center point and radius
are $\left(0,0,H_1\right)$ and $R_1$ m, respectively.
Moreover,
we denote the angular velocity of the moving aerial BS as $\omega_1$ radian per second,
which is positive when the BS moves in a counterclockwise direction and negative otherwise.
We assume that $R_1$, $H_1$, and $\omega_1$ are all fixed and known.
To revolve by $360{}^\circ$, the overall observation time of the BS for imaging is
$T_1= {2\pi}/{\left|\omega_1 \right|} $ s,
and the number of OFDM symbols used for imaging is $L_1 = \left\lceil {{{T}_{1}}}/{{\bar{T}}}\; \right\rceil $.

In this paper, we use the ``stop-and-go'' approximation
that is widely adopted in radar systems \cite{5420035},
under which the BS is assumed to be at a static position within one OFDM symbol period, while at different positions in different OFDM symbol periods.
Assume that the initial position of the aerial BS is $\boldsymbol{p}_{1,0}=\left[R_1,0,H_1\right]^{T}$.
Then, the position of the BS at the $\ell$-th OFDM symbol
is given by
\begin{align}\label{sddjdjsdkjdj}
{{\boldsymbol{p}}_{1,\ell }}=\boldsymbol{R}\left( {{\phi }_{1,\ell }}  \right)\boldsymbol{p}_{1,0}, \, \forall \ell = 0, \cdots, L_1 - 1,
\end{align}
where
${{\phi }_{1,\ell }}=\omega_1 T_0 \ell$
is the azimuth angle from the aerial BS to the circle center point at the $\ell$-th OFDM symbol,
and
\begin{align}
\boldsymbol{R}\left( \phi  \right)=\left[ \begin{matrix}
   \cos \phi  & -\sin \phi & 0  \\
   \sin \phi  & \cos \phi & 0  \\
   0          &     0     & 1 \\
\end{matrix} \right]
\end{align}
is the rotation matrix corresponding to any azimuth angle $\phi$.

Moreover,
under the standard far-field condition in radar imaging,
i.e., distance between the BS and the ROI
greatly exceeds the size of ROI \cite{4502060},
it is assumed that
the incident wavefront is approximated by a plane wave across the ROI,
i.e.,
both the azimuth and elevation angles of the incident
and backscattered
signals can be considered
equal for all scatterers within the region.
Under the above stop-and-go model and the far-field condition,
the received signal of the BS at the $\ell$-th OFDM symbol
is\footnote{In this paper, we assume that the transmit and receive antennas at the BS is separated by 5 - 10 wavelengths such that the self-interference under the full-duplexing mode can be mitigated \cite{9724258}.
Then, since
the BS knows its transmit signals, the self-interference can
be cancelled in the digital domain efficiently.}
\begin{align}\label{sdlfjslkdfjs}
\nonumber  {\tilde{s}_{1,\ell }}\left( t \right)={}& {{\xi }_{1}}\int_{\mathcal{V}}{{{f}_{3D}}\left( {{\boldsymbol{r}}_{1}};{{\phi }_{1,\ell }} \right){{u}_{\ell }}\left( t-\frac{2{{d}_{1,\ell }}\left( {{\boldsymbol{r}}_{1}} \right)}{c} \right)d{{\boldsymbol{r}}_{1}}} \\
 & +{\tilde{z}_{1,\ell }}\left( t \right), \ \forall t, \ell ,
\end{align}
where
$\xi_{1}\in \mathbb{R}$
denotes the path loss
from the BS to the targets to the BS,
${{{\tilde{z}}}_{1, \ell }}\left( t \right) \in \mathbb{C}$
denotes the additive white Gaussian noise (AWGN) of time $t$ at the $\ell$-th symbol duration,
${{f }_{3D}}\left( \boldsymbol{r}_1;{{\phi }_{1,\ell }} \right) \in \mathbb{C}$
denotes the anisotropic backscattering function at the $\ell$-th OFDM symbol duration, from the position
$\boldsymbol{r}_1 = \left[x, y, z\right]^{T}\in \mathcal{V}$
to the BS with an azimuth angle $\phi_{1, \ell}$\footnote{The scattering function also depends on the elevation angle of the reflection. However, in our consider BI-SAR imaging mode, the elevation angle of the BS is fixed over time and known. Therefore, we omit the elevation angle in the scattering function.},
and
${{d}_{1,\ell }}\left( \boldsymbol{r}_1 \right) =  {{\left\| {{\boldsymbol{p}}_{1,\ell }}-
\boldsymbol{r}_1 \right\|}_{2}}$
represents
the distance between the
BS position at the $\ell$-th OFDM symbol and the position $\boldsymbol{r}_1 \in \mathcal{V}$.
Note that
${{f }_{3D}}\left( \boldsymbol{r}_1;{{\phi }_{1,\ell }} \right)$
is non-zero when $\boldsymbol{r}_1$
is on target area
and is zero otherwise.

Next,
after
the fast Fourier transform (FFT),
the received signal
of the $n$-th sub-carrier at the $\ell$-th OFDM symbol
is given by
\begin{align}\label{sdflkjsklfj}
\nonumber {{s}_{1,n,\ell }}= {}& c_{n,\ell} {{\xi }_{1}}\\
\nonumber  & \! \! \times \int_{\mathcal{V}}{{{f}_{3D}}\left( \boldsymbol{r}_1;{{\phi }_{1,\ell }} \right){{e}^{-j\frac{4\pi }{\lambda }{{d}_{1,\ell }}\left( \boldsymbol{r}_1 \right)}}{{e}^{-j2\pi n\Delta f\frac{2{{d}_{1,\ell }}\left( \boldsymbol{r}_1 \right)}{c}}}d\boldsymbol{r}_1} \\
  & \! \!  +{{{z}}_{1,n,\ell }}, \ \forall n,\ell,
\end{align}
where
${{z}_{1,n,\ell }}$
is the noise at the $n$-th sub-carrier of the $\ell$-th symbol,
and
$\lambda$ is the wavelength of the signal.

\subsection{Signal Model for BI-ISAR Imaging Mode}\label{Section_signal_model_scenario2}

Under the BI-ISAR imaging mode,
we assume that the static BS is located at $\boldsymbol{p}_2 = \left[R_2,0,H_2\right]^T \in \mathbb{R}^{3 \times 1}$,
while the targets in ROI revolve
along the $z$-axis with the same angular velocity,
denoted as
$\omega_2$ radian per second,
which is positive when targets are moving in a clockwise direction and negative otherwise.
We assume that
the position of the BS $\boldsymbol{p}_2$
and
the angular velocity $\omega_2$
are all fixed and known.
Similar to BI-SAR,
all the targets revolve by
$360{}^\circ$,
and the observation time of the BS is
$T_2= {2\pi}/{\left|\omega_2 \right|} $ s,
and the number of OFDM symbols used for imaging is $L_2 = \left\lceil {{{T}_{2}}}/{{\bar{T}}}\; \right\rceil $.
Moreover,
we also adopt the stop-and-go approximation for characterizing the movement of the targets.
Then,
for any point in ROI whose initial position is
$\boldsymbol{r}_{2,0}$,
its position at the $\ell$-th OFDM symbol duration is
\begin{align}
{{\boldsymbol{r}}_{2,\ell }}= \boldsymbol{R}\left({{\phi }_{2, \ell }}  \right)\boldsymbol{r}_{2, 0}, \ \forall \ell = 0, \cdots, L_2 - 1,
\end{align}
where
${{\phi }_{2, \ell }}=-\omega_2 T_0 \ell$
is the
azimuth angle of the incident signals under the far-field condition \cite{4502060}.

Similar to (\ref{sdflkjsklfj}),
the BS's received signal of the $n$-th sub-carrier at the $\ell$-th OFDM symbol
is given by
\begin{align}\label{sdflksjflksjd}
\nonumber    &{{s}_{2,n,\ell }} =c_{n,\ell}{{\xi }_{2}}\\
\nonumber & \times \int_{\mathcal{V}}{{{f}_{3D}}\left( \boldsymbol{r}_{2, 0 };{{\phi }_{2,\ell }} \right){{e}^{-j\frac{4\pi }{\lambda }{{d}_{2,\ell }}\left( \boldsymbol{r}_{2, 0 } \right)}}{{e}^{-j2\pi n\Delta f\frac{2{{d}_{2,\ell }}\left( \boldsymbol{r}_{2, 0 } \right)}{c}}}d\boldsymbol{r}_{2, 0 }} \\
  &  +{{{z}}_{2,n,\ell }}, \ \forall n,\ell,
\end{align}
where
$\xi_{2}\in \mathbb{R}$
denotes the path loss
from the BS to the targets to the BS,
${{f}_{3D}}\left( \boldsymbol{r}_{2,0} ; \phi_{2, \ell}  \right)$
denotes the anisotropic scattering function
at the $\ell$-th OFDM symbol duration, from the point whose
initial position at the 0-th OFDM symbol is
$\boldsymbol{r}_{2,0}$ to the BS with an
azimuth angle
$\phi_{2, \ell}$,
${{d}}_{2,\ell}\left( \boldsymbol{r}_{2,0} \right) = {{\left\| {{\boldsymbol{p}}_{2 }}  -
{{\boldsymbol{r}}_{2,\ell }} \right\|}_{    2}}$
denotes the distance between the point
$\boldsymbol{r}_{2, \ell}$
(whose initial position is $\boldsymbol{r}_{2,0}$)
and the BS at the $\ell$-th symbol, and
${{z}_{2,\ell }}$
denotes the AWGN
at the $\ell$-th symbol.

\subsection{A Unified Signal Model for BI-SAR and -ISAR Imaging}

Intuitively,
the revolvement of the aerial BS
along the $z$-axis
by $\omega$
in the clockwise (counterclockwise) direction under BI-SAR
is equivalent to the revolvement of the targets
along the $z$-axis by $\omega$ in the counterclockwise (clockwise) direction under BI-ISAR.
In the following, we provide a unified signal model of the received signals under BI-SAR imaging mode given in
(\ref{sdflkjsklfj})
and those under BI-ISAR imaging mode given in (\ref{sdflksjflksjd}).
Specifically, under both the BI-SAR and BI-ISAR imaging modes,
the received signals at the $\ell$-th OFDM symbol
depend on the initial position
of the BS
(${{\boldsymbol{p}}_{1,0 }}$ for SAR mode
and $\boldsymbol{p}_2$ for ISAR mode),
the initial position of each scattering point in ROI
($\boldsymbol{r}_1$ for SAR mode
and ${{\boldsymbol{r}}_{2,0 }}$ for ISAR mode),
and the azimuth angle between the BS and the ROI
at each $\ell$-th symbol
($\phi_{1, \ell}$ for SAR mode
and $\phi_{2, \ell}$ for ISAR mode).
By
defining $\boldsymbol{p}=\left[R,0,H\right]^{T}$
as the initial BS position,
$\boldsymbol{r}$ as the initial position of a point in ROI,
and $\phi_{ \ell}$ as the azimuth angle between the BS
and the ROI at the $\ell$-th OFDM symbol,
it can be readily shown that
(\ref{sdflkjsklfj})
under BI-SAR
and
(\ref{sdflksjflksjd})
under
BI-ISAR can be unified as
\begin{align}\label{imaging_model_sdfadf}
\nonumber   {{s}_{n,\ell }} ={}&    c_{n,\ell}{{\xi }}\int_{\mathcal{V}}{{{f}_{3D}}\left( \boldsymbol{r};{{\phi }_{\ell }} \right){{e}^{-j\frac{4\pi }{\lambda }{{d}_{\ell }}\left( \boldsymbol{r} \right)}}{{e}^{-j2\pi n\Delta f\frac{2{{d}_{\ell }}\left( \boldsymbol{r} \right)}{c}}}d\boldsymbol{r}} \\
  &  +{{{z}}_{n,\ell }}, \ \forall n,\ell,
\end{align}
where
$\xi$
denotes the path loss
from the BS to the targets to the BS,
and
${{d}_{\ell }}\left( \boldsymbol{r} \right) =  {{\left\| \boldsymbol{R}\left( {{\phi }_{\ell }}  \right)\boldsymbol{p} -
\boldsymbol{r} \right\|}_{2}}$
represents
the distance between the
BS and the position $\boldsymbol{r} \in \mathcal{V}$
at the $\ell$-th symbol.

In following sections,
we
will show how to perform BI imaging based on the unified signal model (\ref{imaging_model_sdfadf}).

\vspace{0.5em}
\begin{remark}
Due to abundant bandwidth resources,
conventional SAR and ISAR imaging generally
rely on small-angle observations.
In contrast to these traditional schemes,
our proposed BI imaging leverages 360-degree observations
to overcome the constraints imposed by narrow bandwidth.
In other words,
the deficiency in bandwidth is compensated for by full-angular observation.
\end{remark}
\vspace{0.5em}

\section{Signal Model Simplification}\label{Section_sldfwefbksjdlkfj}
Directly applying the imaging algorithm to (\ref{imaging_model_sdfadf})
may be challenging due to its complicated structure.
As the integral in (\ref{imaging_model_sdfadf}) comprises three distinct terms,
this section introduces three essential approximations to simplify each term.

\subsection{Bandwidth-Limited Approximation}

In this subsection, a bandwidth-limited approximation is employed to simplify
the exponential term associated with the frequency domain,
i.e., ${{e}^{-j2\pi n\Delta f\frac{2{{d}_{\ell }}\left( \boldsymbol{r} \right)}{c}}}$
in (\ref{imaging_model_sdfadf}).

The bandwidth constraint shown in (\ref{welikjrwlekjr})
implies that
all scatterers within the ROI are located within the same range bin.
As a result,
the signal bandwidth
does not offer additional resolution
to distinguish between these scatterers.
Using (\ref{welikjrwlekjr}),
we can readily obtain the following relationship
\begin{align}
\nonumber \left| {{d}_{\ell }}\left( {{\boldsymbol{r}}_{A}} \right) - {{d}_{\ell }}\left( {{\boldsymbol{r}}_{B}} \right) \right| &\le {{\left\| \left(\boldsymbol{R}\left( {{\phi }_{\ell }}  \right)\boldsymbol{p}-{{\boldsymbol{r}}_{A}}\right)
-\left(\boldsymbol{R}\left( {{\phi }_{\ell }}  \right)\boldsymbol{p}-{{\boldsymbol{r}}_{B}}\right) \right\|}_{2}} \\
& <\frac{c}{2B}, \ \forall {{\boldsymbol{r}}_{A}},{{\boldsymbol{r}}_{B}} \in \mathcal{V}.
\end{align}
Since the scatterers cannot be resolved in range using the available bandwidth,
they can be regarded as sharing a common distance from
the BS.
We approximate the range of every scatterer using a single value
\begin{align}\label{welidjfsdlkhgjlk}
{{d}_{\ell }}\left( \boldsymbol{r} \right) \approx {{\left\| \boldsymbol{R}\left( {{\phi }_{\ell }}  \right)\boldsymbol{p} \right\|}_{2}}=  {{\left\| \boldsymbol{p} \right\|}_{2}} =  \sqrt{{{R}^{2}}+{{H}^{2}}}, \
\forall \ell, \boldsymbol{r}\in \mathcal{V}.
\end{align}
Based on this approximation,
(\ref{imaging_model_sdfadf})
can be simplified as
\begin{align}\label{welkjlknmvnds}
\nonumber {{s}_{n,\ell }}\approx {}& {{c}_{n,\ell }}\xi {{e}^{-j2\pi n\Delta f\frac{2\sqrt{{{R}^{2}}+{{H}^{2}}}}{c}}}\\
&\int_{\mathcal{V}}{{{f}_{3D}}\left( \boldsymbol{r};{{\phi }_{\ell }} \right){{e}^{-j\frac{4\pi }{\lambda }{{d}_{\ell }}\left( \boldsymbol{r} \right)}}d\boldsymbol{r}}+{{z}_{n,\ell }},\ \forall n,\ell.
\end{align}

\subsection{Far-Field Approximation}\label{Section_sldfksjdlkfj}

Then, a far-field approximation
is exploited to simplify the exponential term related to the time domain,
i.e., ${{e}^{-j\frac{4\pi }{\lambda }{{d}_{\ell }}\left( \boldsymbol{r} \right)}}$
in (\ref{imaging_model_sdfadf}).

Under the far-field assumption,
where ${{\left\| \boldsymbol{R}\left( {{\phi }_{\ell }}  \right)\boldsymbol{p} \right\|}_{2}}=  {{\left\| \boldsymbol{p} \right\|}_{2}} \gg {\left\|\boldsymbol{r}\right\|}_{2}$
holds for all $\ell$ and $\boldsymbol{r}\in \mathcal{V}$,
the following approximation can be derived
\begin{align}
\nonumber{{\left\| \boldsymbol{R}\left( {{\phi }_{\ell }}  \right)\boldsymbol{p} -\boldsymbol{r} \right\|}_{2}} &\approx {{\left\| \boldsymbol{R}\left( {{\phi }_{\ell }}  \right)\boldsymbol{p} \right\|}_{2}}-\frac{   \left(\boldsymbol{R}\left( {{\phi }_{\ell }}  \right)\boldsymbol{p}\right)^{T}}{{{\left\| \boldsymbol{R}\left( {{\phi }_{\ell }}  \right)\boldsymbol{p} \right\|}_{2}}}\boldsymbol{r}\\
 &= \sqrt{{{R}^{2}}+{{H}^{2}}}-\frac{1}{\sqrt{{{R}^{2}}+{{H}^{2}}}}\boldsymbol{p}^{T} \boldsymbol{R}^{T}\left( {{\phi }_{\ell }}  \right)\boldsymbol{r}.
\end{align}
Applying this approximation to
(\ref{welkjlknmvnds})
yields (\ref{ujkdcgljfg}),
shown at the top of the next page.
\begin{figure*}[!t]
\begin{align}\label{ujkdcgljfg}
 \nonumber {{{{s}}}_{n,\ell }} & \approx {{c}_{n,\ell }}\xi {{e}^{-j2\pi n\Delta f\frac{2\sqrt{{{R}^{2}}+{{H}^{2}}}}{c}}} \int_{\mathcal{V}}{{{f}_{3D}}\left( \boldsymbol{r};{{\phi }_{\ell }} \right){{e}^{-j\frac{4\pi }{\lambda }\left( \sqrt{{{R}^{2}}+{{H}^{2}}}-\frac{1}{\sqrt{{{R}^{2}}+{{H}^{2}}}}\boldsymbol{p}^{T} \boldsymbol{R}^{T}\left( {{\phi }_{\ell }}  \right)\boldsymbol{r} \right)}}d\boldsymbol{r}}+{{{{z}}}_{n,\ell }} \\
 &= {{c}_{n,\ell }}\xi {{e}^{-j2\pi n\Delta f\frac{2\sqrt{{{R}^{2}}+{{H}^{2}}}}{c}}} \iint_{  \mathcal{V}}  {{{f}_{2D}}  \left( x,y;{{\phi }_{\ell }} \right)  {{e}^{j\frac{4\pi }{\lambda } \frac{R}{\sqrt{{{R}^{2}} +  {{H}^{2}}}}  \left( x\cos \omega {{T}_{0}}\ell +y\sin \omega {{T}_{0}}\ell  \right)}}dxdy} +{{{{z}}}_{n,\ell }}, \ \forall n,\ell.
\end{align}
\normalsize
\hrulefill
\end{figure*}
In (\ref{ujkdcgljfg}),
the original 3D scattering function
degenerates into
\begin{align}\label{dslkjfslkjflskdjf}
\nonumber &{{f}_{2D}}\left( x,y;{{\phi }_{\ell }} \right)={{e}^{-j\frac{4\pi }{\lambda }\sqrt{{{R}^{2}}+{{H}^{2}}}}}\\
& \ \ \ \ \ \ \ \ \ \ \ \times \int_{\mathcal{V}}{{{f}_{3D}}\left( x,y,z;{{\phi }_{\ell }} \right){{e}^{j\frac{4\pi }{\lambda }\frac{zH}{\sqrt{{{R}^{2}}+{{H}^{2}}}}}}dz}, \ \forall \ell .
\end{align}

\begin{remark}
(\emph{Imaging Result})
As shown in (\ref{dslkjfslkjflskdjf}),
the $z$-axis component is merged into the scattering function.
Thus,
our proposed imaging scheme yields a 2D image.
Moreover,
we can also conclude that
\emph{the imaging result is the projection of the targets within the ROI onto the $xoy$ plane}.
\end{remark}

\subsection{Small-Angle Approximation for Scattering Function}\label{Section_Signal_Model_Approximation}

Due to the anisotropic backscattering properties,
the scattering function in (\ref{ujkdcgljfg}),
i.e.,
${{f}_{2D}}\left( x,y;{{\phi }}_{\ell} \right)$,
has different values when the ROI is observed from different azimuth angles ${{\phi }}_{\ell}$.
If (\ref{ujkdcgljfg}) is directly adopted for imaging,
the exponential term in (\ref{ujkdcgljfg}) will be absorbed into ${{f}_{2D}}\left( x,y;{{\phi }}_{\ell} \right)$
and become part of the anisotropic scattering,
thereby failing to realize the intended imaging function.
To address this problem,
a reasonable approximation or representation \cite{650078}
for ${{f}_{2D}}\left( x,y;{{\phi }}_{\ell} \right)$
is necessary.
In conventional ISAR \cite{4502060, 10851319},
it is commonly accepted that the scattering function remains relatively stable
when the total rotational angle is sufficiently small.
However,
for our BI imaging,
the ROI is observed over a full angular range,
rather than the small angles typically in ISAR.
Thus, we
propose the following \emph{small-angle approximation}
to
extend the approximation in ISAR imaging to our scheme.

Define $\delta_{SA} \ll 2\pi$
as a sufficiently small angle
preserving the correlation properties of the scattering function with respect to the angle.
We assume that given any ${{\Phi }} \in \left[0, 2\pi\right)$,
we have
\begin{align}\label{sldfkjslkdfjslkfj}
{{f}_{2D}}\left( x,y;{{\phi }} \right)    \approx     {{f}_{2D}}\left( x,y;{{\Phi }} \right), \
\forall {\phi}\in \left[{{\Phi }}-\delta_{SA}, {{\Phi }}+\delta_{SA}\right],
\end{align}
where we refer to the range of ${\phi}$ as the \emph{angular correlation range (ACR)}.
The parameter $\delta_{SA}$ should be chosen appropriately based on target characteristics.
Extensive experiments confirm that it should generally not exceed $3{}^\circ \frac{\pi}{180{}^\circ}\text{rad}$.

To apply the small-angle approximation,
we uniformly sample $K$ azimuth angle centers
over the range of $\left[0, 2\pi\right)$.
The $k$-th
azimuth angle center can be expressed as
${{\Phi }_{k}}=\frac{2\pi}{K}k$,
where
$k=0,\cdots,K - 1$.
Using (\ref{sldfkjslkdfjslkfj}),
the ACR of the $k$-th azimuth angle center
is
\begin{align}
{{\Phi }_{k}} - \delta_{SA}\le \phi \le {{\Phi }_{k}} + \delta_{SA}, \, \forall k.
\end{align}
Thus, when the BS observes the ROI within the above angular range,
according to (\ref{sldfkjslkdfjslkfj}),
the scattering function
is assumed to be constant,
with the value of ${{f}_{2D}}\left( x,y;{{\Phi }} \right)$.

Next,
we map the above ACR to the symbol index.
Note that the scanning angle interval
between two adjacent symbols is $\omega T_0$.
We can calculate
the symbol index corresponding to the $k$-th azimuth angle center $\Phi_k$
by
$\left[\Phi_k/{\left(\omega T_0\right)}\right] \in \mathbb{Z}$,
and
the number of the symbols corresponding to the small angle $\delta_{SA}$
by
$M = \left[{\delta_{SA}}/{\left(\omega T_0\right)}\right] \in \mathbb{Z}$,
where $\left[\cdot\right]$
denotes the round operation.
During the ACR of the $k$-th azimuth angle center,
i.e., ${{\Phi }_{k}}=\frac{2\pi}{K}k$,
the range of the symbol index is thus given by
\begin{align}\label{sdlfkjsldkfj}
 {\left[\frac{\Phi_k}{\omega T_0}\right]-   M   \le   \ell   \le     \left[\frac{\Phi_k}{\omega T_0}\right]   +   M}  , \ \forall \ell \in \mathbb{Z}, \ \forall k.
\end{align}

Now,
we recast the signal model in (\ref{ujkdcgljfg})
into a 3D representation.
Specifically,
based on (\ref{sdlfkjsldkfj}),
we replace $\ell$ with $\left[\Phi_k/{\left(\omega T_0\right)}\right] + m$,
i.e,
$\ell =\left[ \frac{2\pi }{\omega {{T}_{0}}K}k \right]+m$,
where
$m = -M, \cdots, -1, 0, 1, \cdots, M$.
This yields the refined signal representation shown in (\ref{sdlsdkdkjci}) at the top of the next page.
\begin{figure*}[!t]
\begin{align}\label{sdlsdkdkjci}
\nonumber {{S}_{n,m,k}}& = {{c}_{n,\ell }}\xi {{e}^{-j2\pi n\Delta f\frac{2\sqrt{{{R}^{2}}+{{H}^{2}}}}{c}}} \iint_{\mathcal{V}}{{{f }_{2D}}\left( x,y;{{\Phi }_{k}} \right){{e}^{j\frac{4\pi }{\lambda }\frac{{{R}}}{\sqrt{R^{2}+H^{2}}}\left[ x\cos \left( {{\Phi }_{k}}+{{\omega }}{{T}_{0}}m \right)+y\sin \left( {{\Phi }_{k}}+{{\omega }}{{T}_{0}}m \right) \right]}}dxdy}+{{Z}_{n,m,k }} \\
 & \approx  {{c}_{n,\ell }}\xi {{e}^{-j2\pi n\Delta f\frac{2\sqrt{{{R}^{2}}+{{H}^{2}}}}{c}}} \iint_{\mathcal{V}}{{{{\tilde{f }}}_{2D}}\left( x,y;{{\Phi }_{k}} \right)
 {{e}^{j\frac{4\pi }{\lambda }\frac{{{R}}}{\sqrt{R^{2}+H^{2}}}\left( -x\sin {{\Phi }_{k}}+y\cos {{\Phi }_{k}} \right){{\omega }}{{T}_{0}}m}}   dxdy}+{{Z}_{n,m,k }}, \, \forall n,m, k.
\end{align}
\normalsize
\hrulefill
\end{figure*}
In (\ref{sdlsdkdkjci}),
the first-order approximation
is performed
based on the small-angle approximation,
and
\begin{align}
{{{\tilde{f }}}_{2D}}\left( x,y;{{\Phi }_{k}} \right)={{f }_{2D}}\left( x,y;{{\Phi }_{k}} \right){{e}^{j\frac{4\pi }{\lambda }\frac{{{R}}}{\sqrt{R^{2}+H^{2}}}\left( x\cos {{\Phi }_{k}}+y\sin {{\Phi }_{k}} \right)}}.
\end{align}

\subsection{Signal Model Analysis}\label{Section_Analysis_Scenario}

In this subsection,
we examine the Doppler frequency and the unambiguous imaging region
of the signal model in (\ref{sdlsdkdkjci}).

\vspace{0.7em}

\subsubsection{Doppler Frequency}
First,
we have
the following proposition to characterize the Doppler frequency for a scatterer.
\vspace{0.5em}
\begin{proposition}\label{proposition_sldkfjslkdj}
(\emph{Doppler Frequency})
The Doppler frequency of the scatterer
is
found to be sinusoidally modulated
with respect to the azimuth angle.
For a given azimuth angle $\Phi_k$,
the Doppler frequency corresponding to a scatterer
at position
$\left(x, y\right)$ is expressed as
\begin{align}\label{weliwjleikjr}
 {{D}_f}\left( x,y;{{\Phi }_{k}} \right)=\frac{2\omega }{\lambda }\frac{1}{\sqrt{1+{{\frac {H^{2}}{R^{2}}}}}}\left( -x\sin {{\Phi }_{k}}+y\cos {{\Phi }_{k}} \right), \ \forall k.
\end{align}
\end{proposition}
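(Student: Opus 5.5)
The plan is to read the Doppler frequency directly off the phase progression in the simplified model (\ref{sdlsdkdkjci}). Within the ACR of the $k$-th azimuth center, the contribution of a scatterer at $(x,y)$ to $S_{n,m,k}$ is, for fixed $n$ and $k$, a complex exponential in the slow-time index $m$. Its phase is
\begin{align}
\psi_k(m)=\frac{4\pi }{\lambda }\frac{R}{\sqrt{R^{2}+H^{2}}}\left( -x\sin \Phi_k+y\cos \Phi_k \right)\omega T_0 m .
\end{align}
All remaining factors are constant over $m$ within the ACR, as guaranteed by the small-angle approximation (\ref{sldfkjslkdfjslkfj}): the amplitude $\tilde f_{2D}(x,y;\Phi_k)$, the path loss $\xi$, and the subcarrier-dependent range term. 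The data symbol $c_{n,\ell}$ is known and is removed by element-wise division before Doppler processing. Hence the scatterer appears as a single tone across symbols.

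The next step converts this phase progression into a frequency. Consecutive values of $m$ are separated in time by one symbol, so the instantaneous Doppler frequency is
\begin{align}
D_f=\frac{1}{2\pi}\frac{d\psi_k}{dt}=\frac{1}{2\pi T_0}\frac{d\psi_k}{dm},
\end{align}
which gives
\begin{align}
D_f=\frac{2\omega}{\lambda}\frac{R}{\sqrt{R^2+H^2}}\left(-x\sin\Phi_k+y\cos\Phi_k\right).
\end{align}
The stop-and-go model indexes angle by $\phi_\ell=\omega T_0\ell$; the symbol spacing ($T_0$ versus $\bar T$) is therefore absorbed consistently, since the angle increment per symbol and the time per symbol use the same constant. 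Finally, rewriting $R/\sqrt{R^2+H^2}=1/\sqrt{1+H^2/R^2}$ yields (\ref{weliwjleikjr}).

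For the sinusoidal-modulation claim, I would write $(x,y)=\rho(\cos\theta,\sin\theta)$. Then $-x\sin\Phi_k+y\cos\Phi_k=\rho\sin(\theta-\Phi_k)$, so $D_f$ is a sinusoid in $\Phi_k$ whose amplitude is set by the radial distance $\rho$ and whose phase is set by the polar angle $\theta$.

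As a sanity check, I would confirm the result independently by differentiating the exact far-field phase $-\frac{4\pi}{\lambda}d_\ell(\boldsymbol r)$ in (\ref{ujkdcgljfg}) with respect to time at $\phi=\Phi_k$. This produces the same expression without using the first-order expansion in $m$, which confirms that the linearization in (\ref{sdlsdkdkjci}) is exactly the tangent (instantaneous-frequency) approximation.

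The main obstacle is not the calculus but justifying that the phase term $\frac{4\pi}{\lambda}\frac{R}{\sqrt{R^2+H^2}}(x\cos\Phi_k+y\sin\Phi_k)$ absorbed into $\tilde f_{2D}$ does not contribute a frequency. It contributes none because it is independent of $m$. The second-order residual in $m$ (chirp) is negligible, and I would bound it by $\frac{4\pi}{\lambda}\|\boldsymbol r\|(\omega T_0 M)^2/2$, i.e., of order $\delta_{SA}^2$, which the ACR assumption treats as small.
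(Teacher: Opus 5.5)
Your proposal is correct and follows essentially the same route as the paper: isolate the single-scatterer term in (\ref{sdlsdkdkjci}), note that its phase is linear in $m$, and apply $\frac{1}{2\pi}\frac{\partial}{\partial (T_0 m)}$ to that phase to obtain (\ref{weliwjleikjr}). Your additions (dividing out the $m$-dependent symbols $c_{n,\ell}$, the polar form $\rho\sin(\theta-\Phi_k)$ for the sinusoidal claim, and the check against the unlinearized phase) are consistent with the paper's argument but not needed for it; the chirp-residual remark can also be dropped, since its smallness depends on $\rho\,\delta_{SA}^2/\lambda$ rather than on the ACR assumption alone, and the proposition is stated for the already-linearized model.
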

\begin{proof}
For the scatterer
at position $\left(x, y\right)$,
the signal component in (\ref{sdlsdkdkjci})
is
${{y}_{m,k}}\left| _{\left( x,y \right)} \right.={{\tilde{f }}_{2D}}\left( x,y;{{\Phi }_{k}} \right){{e}^{j\frac{4\pi }{\lambda }\frac{R}{\sqrt{{{R}^{2}}+{{H}^{2}}}}\left( -x\sin {{\Phi }_{k}}+y\cos {{\Phi }_{k}} \right)\omega {{T}_{0}}m}}$, $\forall m,k$.
Therefore, the Doppler frequency of the scatterer
$\left(x, y\right)$ can be calculated as
${{D}}\left( x,y;{{\Phi }_{k}} \right)= \frac{1}{2\pi }\frac{\partial }{\partial \left( {{T}_{0}}m \right)}\text{angle}\left\{ {{y}_{m,k}}\left| _{\left( x,y \right)} \right. \right\}$,
from which (\ref{weliwjleikjr}) is derived.
\end{proof}
\vspace{0.5em}

Therefore, as indicated by Proposition \ref{proposition_sldkfjslkdj},
we can readily conclude that
the Doppler patterns of scatterers located at different $\left(x, y\right)$
positions are distinct,
which is a critical property for ensuring the uniqueness
of the mapping from the scatterer position to the Doppler pattern.

\vspace{0.5em}

\subsubsection{Unambiguous Imaging Region}
Applying the Nyquist-Shannon sampling theorem,
the unambiguous imaging region of the BI imaging
is presented below.
\vspace{0.5em}
\begin{proposition}\label{proposition_UIR}
(\emph{Unambiguous Imaging Region})
The unambiguous imaging region forms a circular area on the 2D plane,
given by
\begin{align}
\sqrt{x^{2}+y^{2}}\le \frac{\lambda }{4{{\omega }}{{T}_{0}}}\sqrt{1+\frac{H^{2}}{R^{2}}}.
\end{align}

\end{proposition}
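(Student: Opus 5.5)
We treat the slow-time index $m$ within each angular correlation range as a uniform sampling grid with sampling interval $T_0$, and apply the Nyquist--Shannon criterion to the Doppler frequencies given by Proposition \ref{proposition_sldkfjslkdj}.

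First, by (\ref{sdlsdkdkjci}), for fixed $k$ the contribution of a scatterer at $(x,y)$ is a discrete-time complex exponential in $m$, of the form $e^{j2\pi D_f(x,y;\Phi_k) T_0 m}$. Its frequency $D_f$ is given by (\ref{weliwjleikjr}). Since the samples are complex, the Doppler spectrum is unambiguous on an interval of width $1/T_0$. We therefore require
\begin{align}
\left| D_f(x,y;\Phi_k)\right| \le \frac{1}{2T_0}.
\end{align}
Two scatterers whose Doppler frequencies differ by a multiple of $1/T_0$ produce identical samples and are indistinguishable.

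Second, imaging fuses all azimuth centers $\Phi_k$ over $[0,2\pi)$. A point must therefore be free of aliasing for every $k$, and the bound must hold for the worst azimuth. Writing $(x,y)=\rho(\cos\theta,\sin\theta)$ gives $-x\sin\Phi_k+y\cos\Phi_k=\rho\sin(\theta-\Phi_k)$. Its maximum over a full revolution is $\rho=\sqrt{x^2+y^2}$, which is attained as $K$ grows; for finite $K$ it is approximately attained.

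Third, substituting this maximum into (\ref{weliwjleikjr}) gives the condition
\begin{align}
\frac{2\omega}{\lambda}\frac{\sqrt{x^2+y^2}}{\sqrt{1+H^2/R^2}}\le \frac{1}{2T_0}.
\end{align}
Rearranging yields the stated disk. We take $\omega$ as $|\omega|$ to cover both rotation directions.

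The main subtlety is justifying that the worst case over all azimuths is the right criterion, rather than a per-$k$ condition. The argument is that the Doppler association in the imaging algorithm needs every observation angle to map consistently to the same grid point. Any point exceeding the radius aliases at the azimuths near $\theta\pm\pi/2$ and would be confused with another location there. Conversely, inside the disk, the Doppler frequencies are unaliased at every $\Phi_k$, and Proposition \ref{proposition_sldkfjslkdj} then gives a unique mapping from position to Doppler pattern.
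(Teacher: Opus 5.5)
Your proposal is correct and follows essentially the same route as the paper's proof. You impose a per-azimuth Nyquist condition on the slow-time tone, $|D_f(x,y;\Phi_k)|\le 1/(2T_0)$, which is exactly the paper's bound $|-x\sin\Phi_k+y\cos\Phi_k|\le \frac{\lambda}{4\omega T_0}\sqrt{1+H^2/R^2}$, and then require it at every observation angle, where the projection reaches $\sqrt{x^2+y^2}$, to obtain the disk. Your extra remarks (using $|\omega|$, and that the worst case is only approximately attained for finite $K$) are harmless refinements of the paper's ``arbitrary observation angles'' step.
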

\begin{proof}
Please refer to Appendix \ref{Appendix_proposition_UIR}.
\end{proof}
\vspace{0.5em}

A larger value of ${\lambda }{\sqrt{1+\left( {{{H}^{2}}}/{{{R}^{2}}} \right)}}$,
a shorter symbol interval $T_0$,
and a slower angular velocity $\omega$
collectively lead to
an expansion of the unambiguous imaging region,
which means that the proposed imaging scheme is capable of
imaging targets with large physical sizes
and preventing ambiguities in the reconstructed image.
This can serve as a practical guideline for parameter design.

\section{Imaging Algorithm for BI Imaging Scheme}\label{Section_sdlfjksldkfjlk}

Our proposed imaging algorithm consists of three main steps.
In Step 1,
we deal with the range term in (\ref{sdlsdkdkjci})
and utilize the range compression to enhance the SNR.
In particular,
after the range compression,
the signal model can be connected to an anisotropic CT model,
which is a very interesting finding.
In Step 2,
because there exists a one-to-one mapping
between the scatterer position and the angle-Doppler pattern
as unveiled in Proposition \ref{proposition_sldkfjslkdj},
we compute the angle-Doppler map
$g\left( D, \Phi \right)$ at each observation angle $\Phi$,
where $D$ denotes the Doppler frequency.
Finally, In Step 3,
we construct a mapping function
that maps the angle-Doppler map
onto the $\left(x,y\right) \in \mathcal{V}$ coordinate system
and achieves the imaging.

\subsection{Step 1: Range Compression}\label{Section_lskjflskdjfskldjf}

Since the bandwidth cannot provide extra information
for resolving scatterers within the ROI,
in this subsection,
the range compression is performed in the frequency domain to
improve the SNR.
Because
$c_{n,\ell}$,
$\xi$,
and
$\sqrt{R^2 + H^2}$
are known to the
BS\footnote{In both imaging scenarios,
because
the parameters of the transmitter, the receiver,
and
the distance between the ROI and the BS
are known to the BS,
the path loss $\xi$
can be readily determined by the radar equation.},
we can realize the range compression by (\ref{imaging_model_1}),
given at the top of the next page,
\begin{figure*}[!t]
\begin{align}\label{imaging_model_1}
{{T}_{m,k}}=\frac{1}{N}\sum\limits_{n}{{{e}^{j2\pi n\Delta f\frac{2\sqrt{{{R}^{2}}+{{H}^{2}}}}{c}}}\frac{{{S}_{n,m,k}}}{{{c}_{n,\ell }}\xi }} =\iint_{\mathcal{V}}{{{{\tilde{f}}}_{2D}}\left( x,y;{{\Phi }_{k}} \right)
\underbrace{{e}^{j\frac{4\pi }{\lambda }\frac{R}{\sqrt{{{R}^{2}}+{{H}^{2}}}}\left( -x\sin {{\Phi }_{k}}+y\cos {{\Phi }_{k}} \right)\omega {{T}_{0}}m}}_{\text{Doppler Term}}  dxdy}+{{{\tilde{Z}}}_{m,k}}, \ \forall m,k.
\end{align}
\normalsize
\hrulefill
\end{figure*}
where
${{{\tilde{Z}}}_{m,k}}=\frac{1}{N}\sum_{n}{{{e}^{j2\pi n\Delta f\frac{2\sqrt{{{R}^{2}}+{{H}^{2}}}}{c}}}\frac{{{Z}_{n,m,k}}}{{{c}_{n,\ell }}\xi }}$.

\vspace{0.5em}

An insightful observation from (\ref{imaging_model_1})
is that the signal model can be reformulated as an anisotropic
CT imaging model, which is formally summarized in the following remark.

\begin{remark}
(BI Imaging versus CT Imaging)
In first-generation CT systems,
after applying the FFT to the received attenuation data at the observation angle $\phi$,
the signal model of the frequency component $\varpi$
is given by \cite{CT_boooook}
\begin{align}\label{sdfsdfsldkfjslkjf}
{{S}_{CT}}\left( \varpi ,\phi  \right) =\iint_{\mathcal{V}}{{{f}_{CT}}\left( x,y \right){{e}^{-j2\pi \varpi \left( x\cos \phi +y\sin \phi  \right)}}dxdy},
\end{align}
where $f_{CT}\left( x,y \right) \in \mathbb{R}$ denotes the attenuation function of the target to be imaged in a CT system.
A comparison between the proposed BI imaging (\ref{imaging_model_1})
and the conventional CT model (\ref{sdfsdfsldkfjslkjf}) reveals both differences and connections.
The difference lies in the nature of the target functions.
In CT,
the attenuation function $f_{CT}\left( x,y \right)$ is isotropic.
In contrast,
the scattering function in radio imaging,
${{{\tilde{f}}}_{2D}}\left( x,y;{{\Phi }_{k}} \right)$,
is typically anisotropic
due to complex scattering phenomena
such as occlusion
and diffraction.
Despite this difference,
the two models
\emph{share identical exponential terms}
upon a simple coordinate transformation,
i.e.,
$s=-\frac{2}{\lambda }\frac{R}{\sqrt{{{R}^{2}}+{{H}^{2}}}}\omega {{T}_{0}}m$
and $\varphi ={{\Phi }_{k}}+\frac{\pi }{2}$.
Therefore,
the signal model of BI imaging can be regarded as an anisotropic CT signal model.
This anisotropy,
however,
causes the differences in algorithm design.
In standard CT imaging,
the
isotropic properties ensure
coherence among the received attenuation data.
Therefore,
some typical algorithms specifically
devised for CT imaging,
such as
the filtered back projection (FBP) algorithm \cite{CT_boooook},
apply coherent processing to reconstruct the image.
In contrast, the anisotropic nature
of BI imaging
hinders the coherence
of signals
across different observation angles.
As a result,
conventional CT algorithms cannot be employed in our model
and
only incoherent imaging methods are applicable,
as detailed in the following subsections.

\end{remark}

\subsection{Step 2: Generation of Angle-Doppler Map}\label{Section_IAA_super_resolution}

In this subsection, the angle-Doppler map is generated to facilitate the reconstruction in Step 3.

Physically,
the angle-Doppler map characterizes
the spectrum at Doppler frequency $D$
when the ROI is observed from angle $\Phi$.
Based on the above definition,
the angle-Doppler map can be readily defined as
\begin{align}\label{AD_Map}
\nonumber {{g}}\left( {{D}}, \Phi  \right)  \triangleq {}&  \iint_{\mathcal{V}}{{{{\tilde{f }}}_{2D}}\left( x,y;{{\Phi }} \right) \delta \left( D - {{D}_f}\left( x,y;{{\Phi }} \right) \right)dxdy}, \\
& \forall D \in \left[-\frac{1}{2T_0}, \frac{1}{2T_0}\right), \Phi \in \left[0, 2 \pi\right).
\end{align}
Our goal is to obtain ${{g}}\left( {{D}}, \Phi  \right)$
using ${{T}_{m,k}}$.

Next, we show how the super-resolution algorithm, IAA \cite{5417172},
is applied to generate the angle-Doppler map for our imaging problem.
Let
$\boldsymbol{b}\left( D \right)\in {{\mathbb{C}}^{\left(2M + 1\right)\times 1}}$,
$\boldsymbol{t}_k \in \mathbb{C}^{\left(2M + 1\right) \times 1}$,
and ${{\boldsymbol{z}}_{k}}\in \mathbb{C}^{\left(2M + 1\right) \times 1}$
denote
the vectors of
the exponential basis function,
the signal model,
and the noise
at the $k$-th azimuth angle, $\forall k$, respectively,
where
${{\left[ \boldsymbol{b}\left( D \right) \right]}_{m}}={{e}^{j{2\pi }D {{T}_{0}}m}}$,
$\forall m$,
$\left[\boldsymbol{t}_k\right]_m = {T}_{m,k}$
and $\left[\boldsymbol{z}_k\right]_m = \tilde{Z}_{m,k}$,
$\forall m,k$.
At the $k$-th azimuth angle,
(\ref{imaging_model_1}) can be rewritten as
a more compact vector form
\begin{align}
{{\boldsymbol{t}}_{k}}    =    \iint_{\mathcal{V}}{  {{{\tilde{f }}}_{2D}} \left( x,y;{{\Phi }_{k}} \right) \boldsymbol{b}\left( {{D}_f}\left( x,y; {{\Phi }_{k}} \right)\right)dxdy} +  {{\boldsymbol{z}}_{k}}, \ \forall k.
\end{align}

To facilitate practical implementation,
we discretize the Doppler frequency range $ D \in \left[-\frac{1}{2T_0}, \frac{1}{2T_0}\right)$ into
a uniform grid of $I$ points,
with $D_i = -\frac{1}{2T_0} + \frac{i}{IT_0}$, $\forall i = 0, \cdots, I - 1$,
denoting the frequency value at the $i$-th grid point.
The data covariance matrix of ${{\boldsymbol{t}}_{k}}$ in IAA algorithm is defined by
\begin{align}\label{dlfjlskdjflsk}
{{\boldsymbol{R}}_{k}}= \sum\limits_{i}{\boldsymbol{b}  \left( {{D_i}}\right){{\left| {{g}}\left( {{D_i}}, \Phi_k  \right) \right|}^{2}}{{\boldsymbol{b}}^{H}}   \left( {{D_i}} \right)}+\sigma _{z}^{2}{{\boldsymbol{I}}_{M}}, \ \forall k,
\end{align}
where
${{\boldsymbol{R}}_{k}} \in \mathbb{C}^{\left(2M + 1\right) \times \left(2M + 1\right)}$, $\forall k$,
and
${{g}}\left( {{D_i}}, \Phi_k  \right)  \in \mathbb{C}$, $\forall i, k$,
is the angle-Doppler map
of the Doppler frequency ${D_i}$,
at the azimuth angle $\Phi_k$.

Next, we apply the IAA algorithm at each azimuth angle,
$\Phi_k$, $\forall k$,
to generate the angle-Doppler map.
In IAA algorithm,
when estimating the signal at the given
Doppler frequency ${D_i}$,
all other signal components are treated as interference.
We define the interference-plus-noise covariance matrix of the Doppler frequency ${D_i}$
at the azimuth angle $\Phi_k$ as
${{\boldsymbol{Q}}_{k}}\left( {D_i} \right)={{\boldsymbol{R}}_{k}}-{{\boldsymbol{b}}^{H}}\left( {D_i} \right){{\left| {{g}}\left( {D_i}, \Phi_k  \right)  \right|}^{2}}\boldsymbol{b}\left( {D_i} \right)$, $\forall i,k$.
The IAA formulates the following weighted least-squares problem to estimate
$g \left( {D_i}, \Phi_k \right)$
\begin{align}\label{problemwsdfkljhl}
\underset{g \left( {D_i}, \Phi_k \right)}{\mathop{\min }}\,\left\| {{\boldsymbol{y}}_{k}}-{{g}}\left( {D_i}, \Phi_k  \right) \boldsymbol{b}\left( {D_i} \right) \right\|_{\boldsymbol{Q}_{k}^{-1}\left( {D_i} \right)}^{2}, \ \forall i,k.
\end{align}
Then, an iterative estimation algorithm is devised for this problem.
For the azimuth angle $\Phi_k$,
at the $j$-th iteration, the estimator for the Doppler frequency ${D_i}$ is
\begin{align}\label{welijhlkerjhl53kgrdflm}
{{{{g }}}_{j+1}}\left( {D_i}, \Phi_k \right)=\frac{{{\boldsymbol{b}}^{H}}\left( {D_i} \right)\boldsymbol{R}_{j, k}^{-1}{{\boldsymbol{y}}_{k}}}{{{\boldsymbol{b}}^{H}}\left( {D_i} \right)\boldsymbol{R}_{j, k}^{-1}\boldsymbol{b}\left( {D_i} \right)}, \ \forall j,k,i,
\end{align}
where
$\boldsymbol{R}_{j,k}$
is obtained by substituting
${{g}}_j\left( {D_i}, \Phi_k  \right) $
into (\ref{dlfjlskdjflsk})
and $\boldsymbol{R}_{j,k}$ is initialized as $\boldsymbol{I}_M$.

Upon convergence,
the estimate of
the angle-Doppler map
at
the Doppler frequency ${D_i}$ of the azimuth angle $\Phi_k$ can be expressed as
\begin{align}\label{dlkfjlskj}
\nonumber  &{\hat{g}}\left( {D_i}, \Phi_k  \right)  \approx  \iint_{\mathcal{V}}{{{{\tilde{f }}}_{2D}}\left( x,y;{{\Phi }_{k}} \right)}\\
  & \ \ \ \ \ \ \ \ \ \ \ \times \chi \left( {D_i} - {{D_f}}\left( x,y;{{\Phi }_{k}} \right) \right)dxdy+{{z}\left( {D_i}, \Phi_k  \right)}, \, \forall i,k,
\end{align}
where $\chi\left(\cdot\right)$
is the ambiguity function of the IAA algorithm with $\chi\left(0\right) = 1$,
and ${{z}\left( {D_i}, \Phi_k  \right)}$ denotes the noise term of the Doppler frequency ${D_i}$ at the azimuth angle $\Phi_k$.

\subsection{Step 3: Doppler Association}\label{Section_EHT}

\begin{algorithm}[!t]
\caption{Proposed IAA-DA Algorithm}\label{Algorithm1}
\hspace*{0.02in}{\bf Input:}
the received signal,
${{s}_{n,\ell }}$, $\forall n, \ell$,
the small-angle parameter, $\delta_{SA}$,
and the number of the
azimuth angle centers, $K$.\\
\hspace*{0.02in}{\bf Output:}
the energy map of the ROI, $G\left( x,y \right)$, $\forall        \left(x, y\right)     \in     \mathcal{V}$.

\begin{algorithmic}[1]
\STATE
Determine the angular correlation range by (\ref{sdlfkjsldkfj})
and
reshape the received signal to ${{S}_{n,m,k}}$ based on (\ref{sdlsdkdkjci}), $\forall n,m,k$.

\STATE
Perform range compression on ${{S}_{n,m,k}}$ according to
(\ref{imaging_model_1})
to yield ${{T}_{m,k}}$, $\forall m,k$.

\FOR{$k=0$ to $K-1$}
\STATE
Initialize $\boldsymbol{R}_{0,k}$ as $\boldsymbol{I}_M$
and employ (\ref{welijhlkerjhl53kgrdflm})
to solve the problem (\ref{problemwsdfkljhl}).

\STATE
Obtain ${\hat{g}}\left( {D_i}, \Phi_k  \right)$ after the convergence of (\ref{welijhlkerjhl53kgrdflm}).

\ENDFOR

\STATE

Use (\ref{dlijfldkjf})
to generate the energy map
of the angle-Doppler map,
${\hat{g}}^{+}\left( {D_i}, \Phi_k  \right)$.

\STATE
Perform (\ref{sdlfkjsdlkfjskldj})
to achieve the Doppler association
and reconstruct
the energy map of the ROI, $G\left( x,y \right)$, $\forall        \left(x, y\right)     \in     \mathcal{V}$.

\end{algorithmic}
\end{algorithm}

Proposition \ref{proposition_sldkfjslkdj}
reveals
a \emph{one-to-one mapping} between the spatial location $\left(x, y\right)$ and the Doppler pattern $D_f  \left( x,y;{{\Phi }_{k}} \right)$.
Thus,
the angle-Doppler map in (\ref{dlkfjlskj}) can be uniquely projected onto the 2D spatial coordinate system via a Doppler association process.
Before proceeding,
it should be noted that the anisotropy of the scattering function
introduces randomness
in both amplitude and, especially, phase,
thereby hindering the coherent integration.
Due to their randomness,
we can only apply the incoherent integration
for Doppler association.
One of the methods to achieve this is to compute the magnitude of ${\hat{g}}\left( {D_i}, \Phi_k  \right)$
to generate an energy map
\begin{align}\label{dlijfldkjf}
{\hat{g}}^{+}\left( {D_i}, \Phi_k  \right)=  \left| {\hat{g}}\left( {D_i}, \Phi_k  \right) \right|, \ \forall i,k.
\end{align}

Next, we introduce our Doppler association algorithm.
Our Doppler association is an integral transform
that maps sinusoidally modulated Doppler frequencies
onto the $\left(x,y\right) \in \mathcal{V}$ coordinate system.
Specifically,
imaging result, $G\left( x,y \right) \in \mathbb{R}$, is given by
\begin{align}\label{sdlfkjsdlkfjskldj}
 G\left( x,y \right)    =     \frac{1}{K}\sum\limits_{k=0}^{K-1}    {\hat{g}^{+}    \left( Q\left( x,y;{{\Phi }_{k}} \right)     , \Phi_k \right)}, \ \forall        \left(x, y\right)     \in     \mathcal{V},
\end{align}
where
${Q}\left( x,y;{{\Phi }_{k}} \right)=\left[ \frac{D_f\left( x,y;{{\Phi }_{k}} \right)}{{{f}_{\Delta }}} \right]{{f}_{\Delta }}$
is the quantized Doppler frequency,
and ${{f}_{\Delta }} = \frac{1}{IT_0}$ is the length of the Doppler grid.

Algorithm \ref{Algorithm1}
summarizes the proposed IAA-DA algorithm for BI imaging schemes.

\section{Resolution Analysis of the Imaging Algorithm}\label{Section_Algorithm_Performance}

In this section,
we characterize the imaging resolution
for the proposed BI imaging.
The point spread function (PSF) describes how an imaging system responds to a point source.
Moreover,
the shape of the PSF directly defines the theoretical resolution limit of the system.
Therefore, in this section,
we first focus on the PSF for our algorithm,
and then derive the imaging resolution.

\subsection{Point Spread Function}

However,
(\ref{dlijfldkjf})
is typically a nonlinear
operation.
Referring to the derivation in Appendix \ref{Appendix_sdflkjslfkjsldkjf}, in a noise-free environment,
(\ref{dlijfldkjf}) can be approximated
to a linear expression
\begin{align}\label{sdflkjslfkjsldkjf}
\nonumber & {\hat{g}}^{+}\left( {D_i}, \Phi_k  \right) \approx \\
&\iint{  {{{\tilde{f }}}_{2D}^{+}}  \left( x,y;{{\Phi }_{k}} \right)\chi^{+}   \left( {D_i}  -  {{D_f}}\left( x,y;{{\Phi }_{k}} \right) \right)dxdy}, \ \forall k, i,
\end{align}
where
$\tilde{f }_{2D}^{+}\left( x,y;{{\Phi }_{k}} \right) \in \mathbb{R}$
is the equivalent scattering function,
${{\chi }^{+}}\left( D \right)$ is the equivalent ambiguity function
defined as $\left| \chi \left( D \right) \right|$
for $D\in \left( -{{D}_{\text{ML}}},{{D}_{\text{ML}}} \right)$
and zero otherwise,
and ${{D}_{\text{ML}}}$
is the mainlobe range of $\chi \left( D \right)$.

Next, we derive the PSF based on (\ref{sdlfkjsdlkfjskldj}) and (\ref{sdflkjslfkjsldkjf}).
For simplicity, we consider a single scatterer located at the origin $\left(0,0\right)$.
It is important to note that the subsequent derivation and analysis hold without loss of generality
for a scatterer at any arbitrary position.
We have
\begin{align}\label{sdfjskldkf}
{{\tilde{f}}_{2D}^{+}}\left( x,y;{{\Phi }} \right)={{\tilde{f}}_{2D}^{+}}\left( {{\Phi }} \right)\delta \left( x,y \right), \ \forall        \left(x, y\right)     \in     \mathcal{V},
\end{align}
where $\delta \left( x, y \right)$ is the Dirac delta function.
Then, the PSF is provided in the following proposition.
\vspace{0.5em}
\begin{proposition}\label{proposition_PSF}
(\emph{Point Spread Function of the Imaging Algorithm})
For a single scatterer located at $\left(0,0\right)$,
the PSF is
\begin{align}\label{jsflkdjlfskjlkd}
\nonumber &\text{PSF}\left( x,y \right)\\
& =  \frac{1}{2\pi } \int_{0}^{2\pi }{\tilde{f }_{2D}^{+}\left( \Phi -{{\varphi }_{x,y}} \right){{\chi }^{+}}\left(\beta^{-1} {{\sqrt{{{x}^{2}}+{{y}^{2}}}}}\cos \Phi  \right)d\Phi },
\end{align}
where
${{\varphi }_{x,y}}=\arccos \frac{y}{\sqrt{{{x}^{2}}+{{y}^{2}}}}$
and
$\beta = \frac{\lambda {}}{2\omega }\sqrt{1+\frac{{{H}^{2}}}{{{R}^{2}}}}$.
In particular,
the value of the PSF at the origin is
\begin{align}\label{Dfsdlfksl}
\text{PSF}\left( 0,0 \right)=\frac{1}{2\pi }\int_{0}^{2\pi }{\tilde{f }_{2D}^{+}\left( \Phi  \right)d\Phi }.
\end{align}
\end{proposition}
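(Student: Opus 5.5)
Substitute the point-scatterer model (\ref{sdfjskldkf}) into the linearized energy map (\ref{sdflkjslfkjsldkjf}) and then into the Doppler association (\ref{sdlfkjsdlkfjskldj}). In the continuum limit $K\to\infty$ the average over the azimuth centers becomes $\frac{1}{2\pi}\int_0^{2\pi}(\cdot)\,d\Phi$, and the Doppler quantization $Q(\cdot)$ becomes the identity ($f_\Delta\to 0$). With this limit the plan reduces to evaluating one delta integral and making one change of variables.

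First, the sifting property of $\delta(x',y')$ in (\ref{sdflkjslfkjsldkjf}) gives ${\hat g}^{+}(D,\Phi)\approx \tilde f^{+}_{2D}(\Phi)\,\chi^{+}\left(D-D_f(0,0;\Phi)\right)=\tilde f^{+}_{2D}(\Phi)\,\chi^{+}(D)$, because $D_f(0,0;\Phi)=0$ by Proposition \ref{proposition_sldkfjslkdj}. Evaluating this at $D=D_f(x,y;\Phi)$, as (\ref{sdlfkjsdlkfjskldj}) prescribes, yields
\begin{align}
\text{PSF}(x,y)=\frac{1}{2\pi}\int_0^{2\pi}\tilde f^{+}_{2D}(\Phi)\,\chi^{+}\left(D_f(x,y;\Phi)\right)d\Phi .
\end{align}

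Next, rewrite $D_f$ from (\ref{weliwjleikjr}) as $D_f(x,y;\Phi)=\beta^{-1}(-x\sin\Phi+y\cos\Phi)$, where $\beta^{-1}=\frac{2\omega}{\lambda}\big/\sqrt{1+H^2/R^2}$, which matches the $\beta$ in the statement. Writing $x=\rho\sin\varphi$ and $y=\rho\cos\varphi$ with $\rho=\sqrt{x^2+y^2}$ gives $-x\sin\Phi+y\cos\Phi=\rho\cos(\Phi+\varphi)$, where $\varphi=\varphi_{x,y}=\arccos(y/\rho)$. Substituting $\Phi'=\Phi+\varphi$ and using the $2\pi$-periodicity of $\tilde f^{+}_{2D}$ in azimuth, so that the integration interval can be shifted back to $[0,2\pi)$, produces exactly (\ref{jsflkdjlfskjlkd}), with the argument $\Phi'-\varphi_{x,y}$ in the scattering function.

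The formula (\ref{Dfsdlfksl}) then follows by setting $\rho=0$ and using $\chi^{+}(0)=|\chi(0)|=1$.

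The main obstacle is the sign and branch of $\varphi_{x,y}$. The function $\arccos$ only returns values in $[0,\pi]$, so for $x<0$ the identity $x=\rho\sin\varphi$ fails. I would handle this by noting that $\chi^{+}$ is even: the mainlobe magnitude of the IAA/periodogram-type ambiguity function is symmetric. I would also note that for $x<0$ one may use $-\varphi$, which amounts to reflecting the integration variable $\Phi\to-\Phi$ together with the corresponding reflection of the argument of $\tilde f^{+}$. I would state this convention explicitly rather than hide it. A secondary point is to state clearly that the passage $K\to\infty$, $f_\Delta\to 0$ is an idealization in which the sum is treated as a Riemann sum and the quantization error is neglected.
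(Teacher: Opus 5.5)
Your proposal is correct and follows the paper's route. The paper omits the proof as straightforward: insert the point-scatterer model into (\ref{sdflkjslfkjsldkjf}) and then into the Doppler association (\ref{sdlfkjsdlkfjskldj}), which is exactly your sifting step, continuum limit, and rotation $\Phi\mapsto\Phi+\varphi_{x,y}$.

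On the branch issue, evenness of $\chi^{+}$ does not repair the literal $\arccos$. It only yields $\text{PSF}(x,y)=\text{PSF}(-x,-y)$, so the formula still fails when $x<0$, $y\neq 0$ for a general anisotropic $\tilde f^{+}_{2D}$. The fix that works is your second one: read $\varphi_{x,y}$ as the signed angle with $x=\sqrt{x^{2}+y^{2}}\sin\varphi_{x,y}$ and $y=\sqrt{x^{2}+y^{2}}\cos\varphi_{x,y}$. This is also how Appendix~\ref{Appendix_proposition_resolution} uses it, with $\alpha\in[0,2\pi]$.
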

\begin{proof}
It is straightforward to prove (\ref{jsflkdjlfskjlkd}) and (\ref{Dfsdlfksl}) using
(\ref{sdlfkjsdlkfjskldj})
and
(\ref{sdflkjslfkjsldkjf}),
and is
therefore omitted for simplicity.
\end{proof}
\vspace{0.5em}

\begin{remark}\label{skfjsdlkfjsdkdfj}
Based on (\ref{Dfsdlfksl}),
we can conclude that
in the single-scatterer case,
the proposed algorithm aggregates the backscattered signal energy from all observation angles
at the scatterer position.
For general cases with multiple scatterers,
$G\left( x,y \right)$
can also be regarded as an approximate energy map.
\end{remark}

\subsection{Imaging Resolution}

First,
we present the following resolution concept for our proposed scheme.
\vspace{0.2em}
\begin{define}\label{Definition_resolution}
(\emph{Resolution})
Let
$\left( \zeta \left( \alpha  \right)\sin \alpha ,\zeta\left( \alpha  \right)\cos \alpha  \right)$
denote the $x$-$y$ coordinates of
the $-3\text{dB}$ contour of
the normalized
$\frac{\text{PSF}\left( x,y \right)}{\text{PSF}\left( 0,0 \right)}$,
where
$\zeta\left( \alpha  \right)\in {{\mathbb{R}}_{\ge 0}}$
and
$\alpha \in \left[0, 2\pi\right]$.
The resolution is then defined as
\begin{align}
{{\delta }_{\text{IAA-DA}}}=\underset{\alpha }{\mathop{\max }}\,\left[\zeta\left( \alpha  \right) + \zeta\left( \alpha + \pi  \right)\right].
\end{align}
\end{define}
\vspace{0.2em}

Note that the defined resolution represents \emph{the worst resolution} achievable over all directions.
We provide the following imaging resolution
based on Definition \ref{Definition_resolution}.

\vspace{0.2em}
\begin{proposition}\label{proposition_resolution}
(\emph{Resolution of the Imaging Algorithm})
The resolution of the proposed IAA-DA algorithm,
denoted by $\delta_{\text{IAA-DA}}$,
is given by
\begin{align}\label{dsyysdsdsfkjsddfsdfslkfj}
{{\delta }_{\text{IAA-DA}}}={{\kappa }_{I}}{{\kappa }_{S}}\delta _{\text{benchmark}} = {{\kappa }_{I}}{{\kappa }_{S}}{{C}_{R}}\frac{\lambda }{4{{\delta }_{SA}}}\sqrt{1+\frac{{{H}^{2}}}{{{R}^{2}}}},
\end{align}
where
$\delta_{\text{benchmark}} = {{C}_{R}}\frac{\lambda }{4{{\delta }_{SA}}}\sqrt{1+\frac{{{H}^{2}}}{{{R}^{2}}}}$
is the resolution benchmark;
$C_R \approx 1.29$ is a constant;
$\kappa_I \ge 1$
is the degradation coefficient
due to the anisotropic property of the scattering function,
and $\kappa_I = 1$
when the scattering function is isotropic;
$\kappa_S \le 1$
is the resolution gain
from the super-resolution algorithm,
and
$\kappa_S = 1$
when applying the DFT algorithm to generate the angle-Doppler map.

\end{proposition}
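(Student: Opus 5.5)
Starting from the PSF in Proposition \ref{proposition_PSF}, I will first treat the benchmark case $\kappa_I=\kappa_S=1$: isotropic scattering, $\tilde f^{+}_{2D}(\Phi)\equiv$ const, with the angle-Doppler map generated by DFT. Under DFT over the $2M+1$ symbols of an ACR, the ambiguity magnitude is the Dirichlet kernel $|\chi(D)| = |\sin(\pi D T_0(2M+1))/((2M+1)\sin(\pi D T_0))|$. Since $(2M+1)T_0 \approx 2\delta_{SA}/\omega$, this is approximately $|\mathrm{sinc}(2\delta_{SA}D/\omega)|$ restricted to its mainlobe. Substituting into \eqref{jsflkdjlfskjlkd}, the isotropic PSF becomes radially symmetric,
\begin{align}
\frac{\text{PSF}(\rho)}{\text{PSF}(0)}=\frac{1}{2\pi}\int_0^{2\pi}\chi^{+}\!\left(\beta^{-1}\rho\cos\Phi\right)d\Phi,\quad \rho=\sqrt{x^2+y^2},
\end{align}
so $\zeta(\alpha)$ does not depend on $\alpha$. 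Definition \ref{Definition_resolution} then gives $\delta=2\rho_{3\text{dB}}$.

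Next I rescale. Put $u=2\delta_{SA}\rho/(\omega\beta)=4\delta_{SA}\rho/(\lambda\sqrt{1+H^2/R^2})$. The normalized PSF becomes a universal function $h(u)=\frac{1}{2\pi}\int_0^{2\pi}|\mathrm{sinc}(u\cos\Phi)|\,\mathbb{1}\{|u\cos\Phi|<1\}\,d\Phi$, with $h(0)=1$. Solving $h(u_0)=1/\sqrt2$ (or $1/2$, depending on the power/amplitude convention for $-3$dB) numerically gives a pure number $u_0$. Then $\delta_{\text{benchmark}}=2u_0\cdot\frac{\lambda}{4\delta_{SA}}\sqrt{1+H^2/R^2}$, which identifies $C_R=2u_0\approx1.29$. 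I expect this numerical step to be the main obstacle. $h$ is an angular average of a truncated sinc, so it has no closed form. I would expand $|\mathrm{sinc}|$ or use Bessel-type identities, for example noting that the average of $\cos(2\pi s u\cos\Phi)$ is $J_0$, to write $h$ as an integral of $J_0$. I would then evaluate the crossing numerically and check consistency with the claimed $1.29$, adjusting the $-3$dB convention if needed.

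Finally I will account for the two coefficients by monotonicity arguments rather than exact formulas. For $\kappa_I$: with anisotropic $\tilde f^{+}_{2D}\ge0$, the PSF is the weighted average in \eqref{jsflkdjlfskjlkd} with weights $\tilde f^{+}_{2D}(\Phi-\varphi_{x,y})$, normalized by their mean. When energy concentrates on few angles, the effective angular aperture shrinks. In the limit, the PSF degenerates toward a 1D sinc in one direction, which is elongated along the cross direction. So the worst-direction width $\max_\alpha[\zeta(\alpha)+\zeta(\alpha+\pi)]$ can only grow relative to the uniform-weight case. Hence $\kappa_I\ge1$, with equality for constant weights. For $\kappa_S$: IAA produces an ambiguity function $\chi$ whose mainlobe is no wider than the DFT Dirichlet mainlobe. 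Since $h$ is monotone in the mainlobe scale, a narrower $\chi^{+}$ (a dilation by factor $\le1$) shrinks $\rho_{3\text{dB}}$ by the same factor. This gives $\kappa_S\le1$, with equality for DFT. Because both effects act as multiplicative rescalings of the benchmark, multiplying them yields \eqref{dsyysdsdsfkjsddfsdfslkfj}.
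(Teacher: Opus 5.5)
Your benchmark computation follows the paper's route: rotation invariance of the isotropic PSF, a numerically determined crossing of a universal radial profile, then rescaling through $\beta$ and $(2M+1)T_0\approx 2\delta_{SA}/\omega$. Your $\kappa_S$ step is no weaker than the paper's, which simply invokes super-resolution.

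The genuine gap is $\kappa_I\ge1$. You only examine the degenerate limit where the weight collapses onto one angle, giving a ridge-shaped PSF of infinite width. You then assert that the worst-direction width ``can only grow''. Nothing in this controls intermediate weights: mass on two orthogonal angles, for example, gives a cross-shaped PSF of finite width. So this is intuition, not a proof. The missing idea, which the paper uses, is an averaging identity on the benchmark circle. The normalization $\text{PSF}(0,0)=1$ forces $\frac{1}{2\pi}\int_0^{2\pi}\tilde f^{+}_{2D}(\Phi)\,d\Phi=1$. By Fubini and periodicity,
\[
\frac{1}{2\pi}\int_0^{2\pi}\text{PSF}(\zeta_{\mathrm{iso}}\sin\alpha,\zeta_{\mathrm{iso}}\cos\alpha)\,d\alpha=\frac{1}{2\pi}\int_0^{2\pi}\chi^{+}\!\left(\beta^{-1}\zeta_{\mathrm{iso}}\cos\Phi\right)d\Phi ,
\]
which is the isotropic PSF at radius $\zeta_{\mathrm{iso}}$, i.e. exactly the $-3$dB level. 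Hence some direction $\alpha_0$ has PSF at least the $-3$dB level at radius $\zeta_{\mathrm{iso}}$. Since $\chi^{+}$ is nonincreasing in $|D|$ and the weights are nonnegative, the PSF is nonincreasing along every ray, so $\zeta(\alpha_0)\ge\zeta_{\mathrm{iso}}$. Because $\chi^{+}$ is even, the PSF is centrally symmetric, so $\zeta(\alpha_0)+\zeta(\alpha_0+\pi)\ge2\zeta_{\mathrm{iso}}$. This gives $\kappa_I\ge1$ for every nonnegative weight, with equality in the isotropic case.

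Three smaller points.
\begin{itemize}
\item On the convention you left open: on the relevant range $u<1$ the truncation is inactive and $h(u)=\int_0^1 J_0(\pi u t)\,dt$. Its crossing of the amplitude level $1/\sqrt2$ is at $u_0\approx0.644$, giving $C_R=2u_0\approx1.29$; amplitude is the right level because the PSF is built from $|\hat g|$. The power level $1/2$ would instead give $C_R\approx1.81$.
\item Anisotropy is not a rescaling of the PSF; it changes the PSF's shape. So read the final product as the definitions $\kappa_I=\delta_{\text{DFT-DA}}/\delta_{\text{benchmark}}$ and $\kappa_S=\delta_{\text{IAA-DA}}/\delta_{\text{DFT-DA}}$, not as a product of two dilations.
\item For $\kappa_S\le1$ you do not need IAA's kernel to be a dilation. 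Pointwise domination $\chi^{+}_{\text{IAA}}\le\chi^{+}_{\text{DFT}}$ with $\chi(0)=1$ already makes the PSF pointwise smaller with the same peak, for any weights.
\end{itemize}
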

\begin{proof}
Please refer to Appendix \ref{Appendix_proposition_resolution}.
\end{proof}
\vspace{0.2em}

In particular,
the resolution of the DFT-DA imaging algorithm, which is
obtained by replacing the IAA estimator with a standard DFT
method in the IAA-DA algorithm,
can be expressed as
${{\delta }_{\text{DFT-DA}}}={{\kappa }_{I}}\delta _{\text{benchmark}}$.

\vspace{0.5em}
\begin{remark}
(\emph{Equivalent Synthesized Bandwidth})
In conventional SAR and ISAR imaging systems,
2D reconstruction relies on
signal bandwidth for range resolution
and Doppler processing for cross-range resolution.
In contrast,
the proposed scheme
utilizes
Doppler information
to resolve both imaging dimensions.
Thus,
an equivalent synthesized bandwidth
is achieved through Doppler processing.
Denoted by
$B_{\text{IAA-DA}}^{\text{ESB}}$,
the equivalent bandwidth
is given by
\begin{align}
B_{\text{IAA-DA}}^{\text{ESB}}=\frac{c}{2{{\delta }_{\text{IAA-DA}}}}=\frac{1}{{{\kappa }_{I}}{{\kappa }_{S}}{{C}_{R}}}\frac{2{{\delta }_{SA}}{{f}_{0}}}{\sqrt{1+\frac{{{H}^{2}}}{{{R}^{2}}}}}.
\end{align}

\end{remark}
\vspace{0.5em}

For instance,
considering a BI imaging scenario
with
carrier frequency 30 GHz,
the height-radius ratio
${H}/{R} = 0.1$,
the small-angle parameter
$\delta_{SA} = 3{}^\circ  \frac{\pi}{180{}^\circ} \, \text{rad}$,
the degradation coefficient $\kappa_I = 1.3$,
and the resolution gain $\kappa_S = 0.5$,
we can calculate that
the resolution benchmark
is
$\delta_{\text{benchmark}} = 6.19 \, \text{cm}$
and the imaging resolution
is
${{\delta }_{\text{IAA-DA}}} = 4.02 \, \text{cm}$,
achieving centimeter-level imaging resolution.
Furthermore,
the equivalent synthesized bandwidth is $B_{\text{IAA-DA}}^{\text{ESB}}=3.73\, \text{GHz}$.

\vspace{0.5em}
\begin{remark}
(\emph{Resolution Analysis})
A key observation from Proposition \ref{proposition_resolution} is that the imaging resolution of the proposed algorithm
is independent of the target-BS distance.
This is because the BI imaging scheme relies solely on Doppler frequency.
Additionally,
as the resolution depends on the wavelength $\lambda$,
our proposed BI scheme is particularly well-suited for implementation in
systems with shorter wavelengths,
such as
mmWave communication systems.

\end{remark}

\section{Extension of the IAA-DA Algorithm to General Cases with Large-Sized ROI}\label{Section_large_size}

The above proposed IAA-DA algorithm and its analysis
are based on the small-sized ROI constraint, as specified in (\ref{welikjrwlekjr}).
In this section,
we extend the proposed algorithm
to more general imaging scenarios
with a large-sized ROI.
In such cases, targets within the ROI may occupy different range bins.
For instance, in a communication system with a bandwidth of 200 MHz,
i.e., corresponding to a range resolution of $0.75\text{m}$,
and an ROI with a maximum size of $2\text{m}$,
the target would occupy about three range bins.
This implies that the Doppler information contained within a single range bin is insufficient to represent all the scattering characteristics of targets.
To address this,
we propose an algorithm that effectively extends the IAA-DA approach to ROIs of arbitrary size. Specifically, we first perform imaging independently on each subcarrier and then synthesize the resulting images. The detailed procedure is described as follows.

Without the approximation given in (\ref{welidjfsdlkhgjlk}),
we omit the detailed derivation and directly present
the approximated signal model for large-sized ROI cases
in (\ref{welijlwkjnlgkn}), which is shown at the top of the next page.
\begin{figure*}[!t]
\begin{align}\label{welijlwkjnlgkn}
\nonumber  {{S}_{n,m,k}}& \approx {{c}_{n,\ell }}\xi \iint_{\mathcal{V}}{{{{\tilde{f}}}_{2D}}\left( x,y;n,{{\Phi }_{k}} \right){{e}^{j2\pi \frac{R}{\sqrt{{{R}^{2}}+{{H}^{2}}}}\left( \frac{2}{\lambda }+\frac{n\Delta f}{c} \right)\left( -x\sin {{\Phi }_{k}}+y\cos {{\Phi }_{k}} \right)\omega {{T}_{0}}m}}dxdy}+{{Z}_{n,m,k}} \\
 & \approx {{c}_{n,\ell }}\xi \iint_{\mathcal{V}}{{{{\tilde{f}}}_{2D}}\left( x,y;n,{{\Phi }_{k}} \right){{e}^{j\frac{4\pi }{\lambda }\frac{R}{\sqrt{{{R}^{2}}+{{H}^{2}}}}\left( -x\sin {{\Phi }_{k}}+y\cos {{\Phi }_{k}} \right)\omega {{T}_{0}}m}}dxdy}+{{Z}_{n,m,k}}, \ \forall n,m,k.
\end{align}
\normalsize
\hrulefill
\end{figure*}
In (\ref{welijlwkjnlgkn}),
we omit the term ${n  \Delta f}/{c}$ because ${2}/{\lambda }={2{{f}_{0}}}/{c}$
is much greater than ${n\Delta f}/{c}$,
and
${{{\tilde{f}}}_{2D}}\left( x,y;n,{{\Phi }_{k}} \right)$
denotes the scattering function of the $n$-th subcarrier, given by
\begin{align}\label{weligsfdgdfgjlwkjnlgkn}
\nonumber & {{{\tilde{f}}}_{2D}}\left( x,y;n,{{\Phi }_{k}} \right) =\\
\nonumber &{{e}^{-j2\pi \left( \frac{2}{\lambda }+\frac{n\Delta f}{c} \right)\sqrt{{{R}^{2}}+{{H}^{2}}}}}{{e}^{j2\pi \left( \frac{2}{\lambda }+\frac{n\Delta f}{c} \right)\frac{R}{\sqrt{{{R}^{2}}+{{H}^{2}}}}\left( x\cos {{\Phi }_{k}}+y\sin {{\Phi }_{k}} \right)}}\\
&\times \int_{\mathcal{V}}{{{f}_{3D}}\left( x,y,z;{{\phi }_{\ell }} \right){{e}^{j2\pi \left( \frac{2}{\lambda }+\frac{n\Delta f}{c} \right)\frac{zH}{\sqrt{{{R}^{2}}+{{H}^{2}}}}}}dz}, \  \forall n,k.
\end{align}
It can be seen from (\ref{welijlwkjnlgkn})
that for a specific subcarrier $n_1$,
its signal model ${{S}_{n_1,m,k}}$
is exactly the same as ${{T}_{m,k}}$,
as shown in (\ref{imaging_model_1}).

As received signal at each subcarrier is narrowband and naturally satisfies condition (\ref{welikjrwlekjr}),
we can independently
apply the IAA-DA algorithm to each subcarrier.
Let $G_n \left(x, y\right)\in\mathbb{R}$ denote the resulting image using the signal at the $n$-th subcarrier.
The synthesized image, denoted by $G_{\text{Syn}} \left(x, y\right)\in\mathbb{R}$, is obtained by
\begin{align}
G_{\text{Syn}} \left(x, y\right) = \frac{1}{N} \sum\limits_{n=0}^{N - 1}G_n \left(x, y\right), \ \forall        \left(x, y\right)     \in     \mathcal{V}.
\end{align}
Therefore,
$G_{\text{Syn}} \left(x, y\right)$
is the resulting image that synthesizes the images over all subcarriers.

\begin{figure*}[!t]
\centering
\subfigure[]
{\includegraphics[height=1.37in]{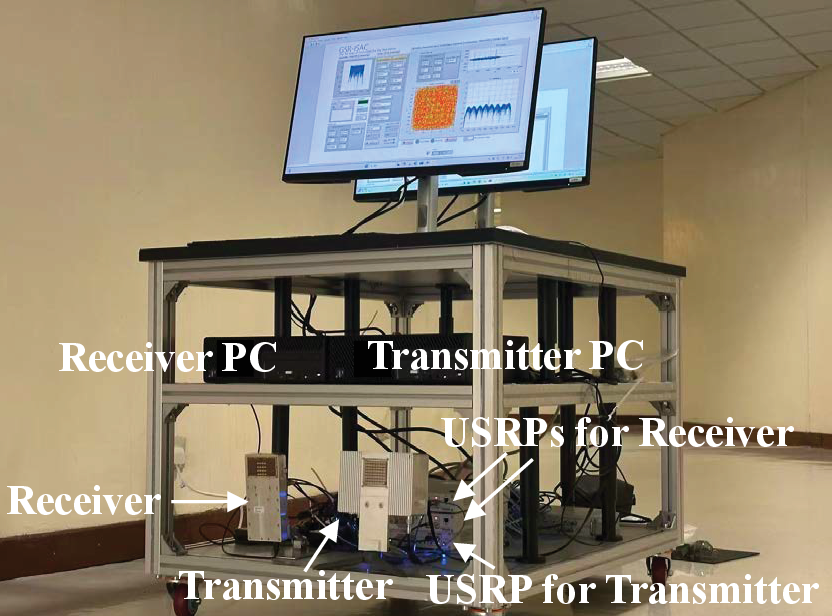}} \ \ \ \
\subfigure[]
{\includegraphics[height=1.37in]{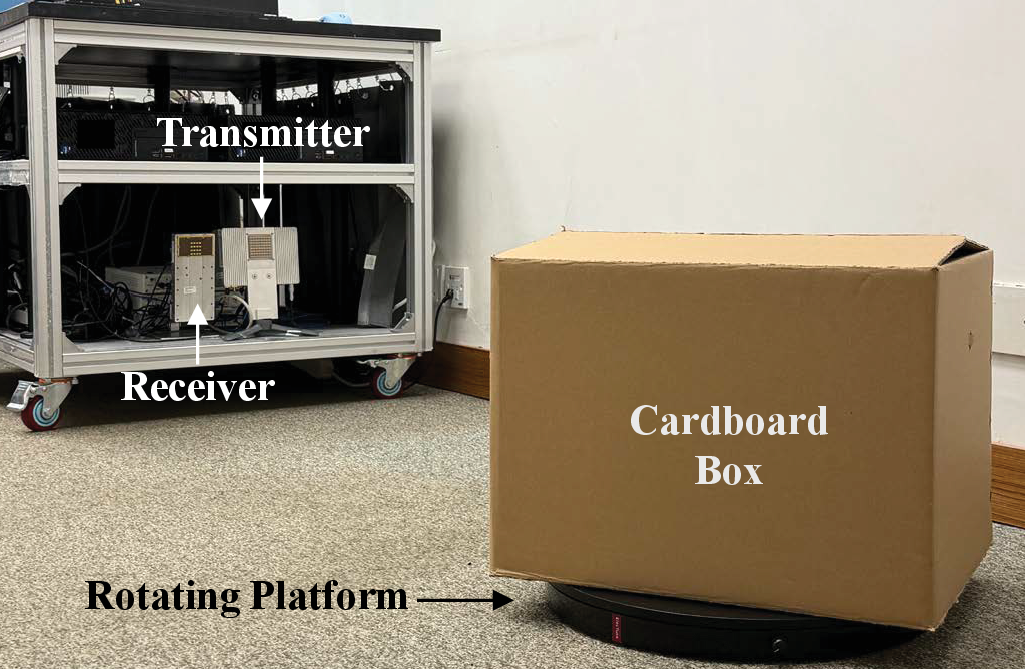}} \ \ \ \
\subfigure[]
{\includegraphics[height=1.37in]{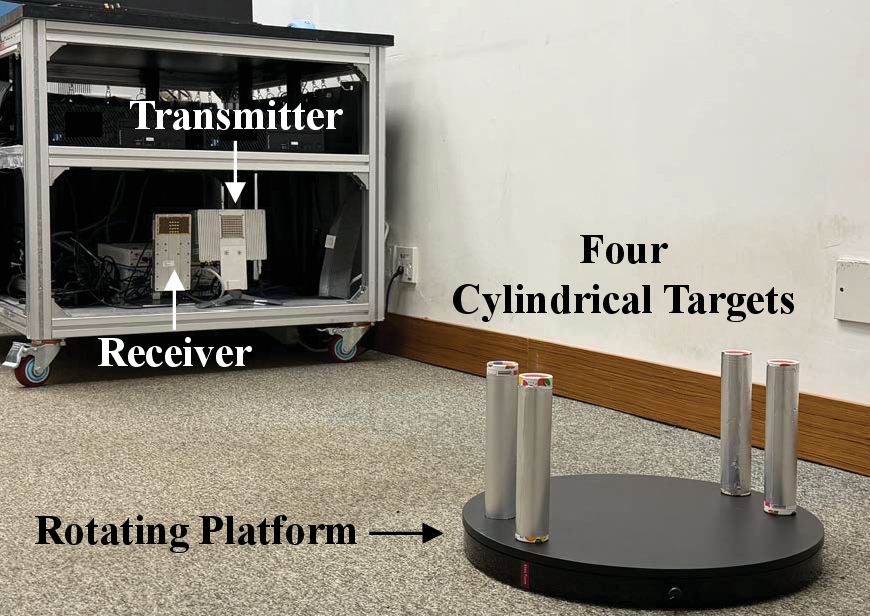}}
\caption{Experiment scenarios
with
(a) hardware,
(b) a box
with the size of
$0.35\text{m} \times 0.41\text{m} \times 0.33\text{m}$
(along $x$, $y$, $z$ axes),
and
(c) four cylindrical targets
located at
$\left(7.5, 14.5\right)\text{cm}$,
$\left(14.5, 9.5\right)\text{cm}$,
$\left(-12, -11\right)\text{cm}$,
and
$\left(-5.5, -15\right)\text{cm}$.}
\label{Fig_exp_box_results}
\end{figure*}

\begin{figure*}[!t]
\centering
\subfigure[]
{\includegraphics[width=1.65in]{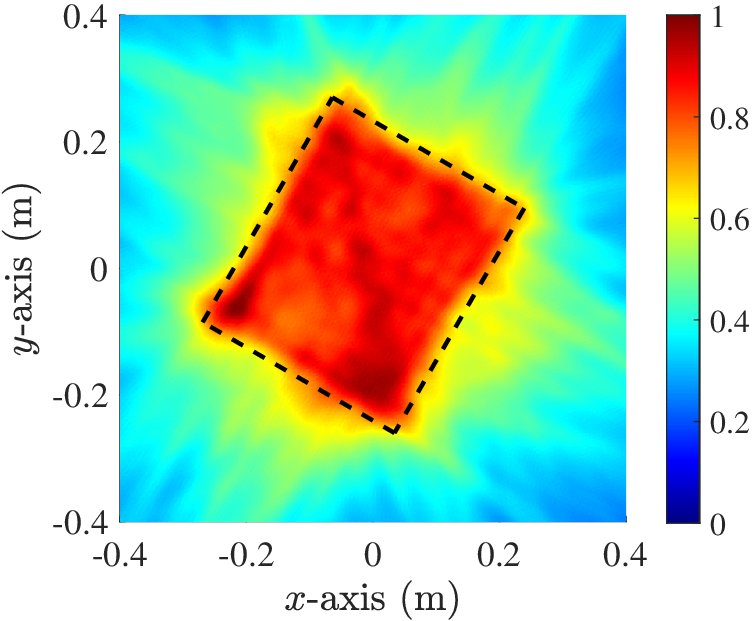}} \ \
\subfigure[]
{\includegraphics[width=1.65in]{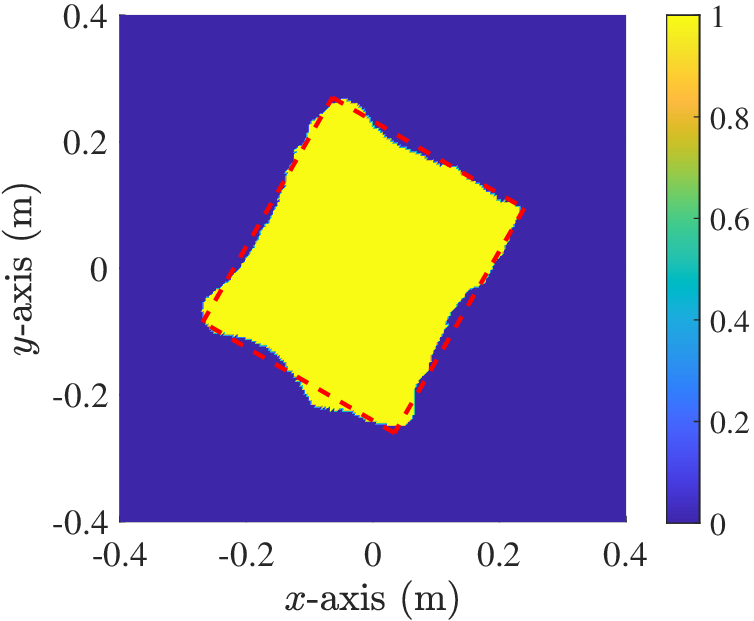}} \ \
\subfigure[]
{\includegraphics[width=1.65in]{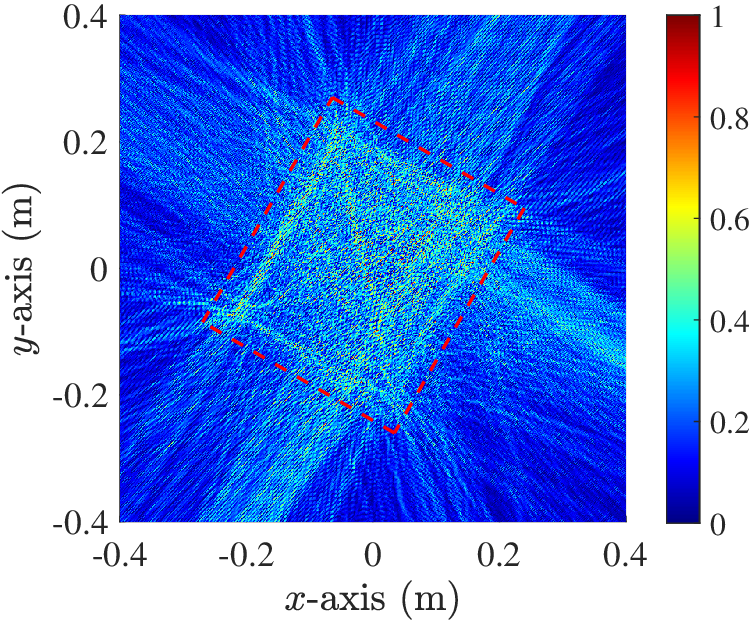}} \ \
\subfigure[]
{\includegraphics[width=1.65in]{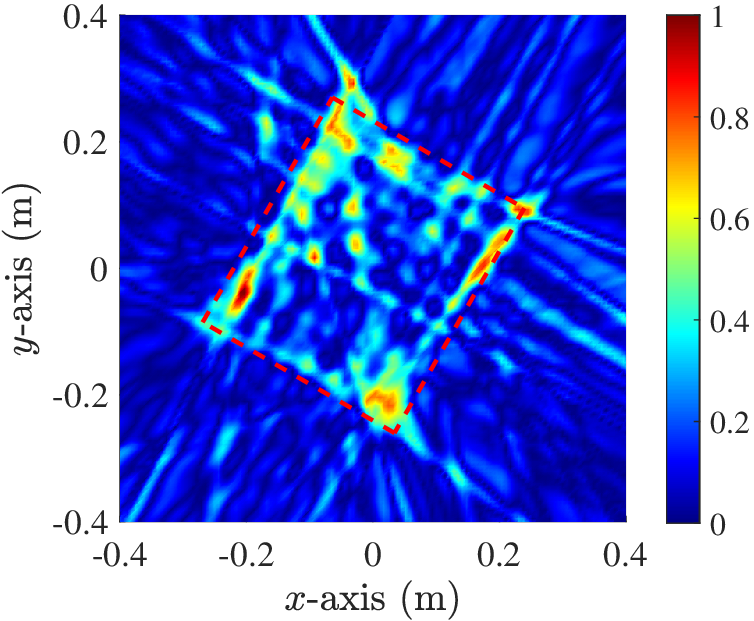}}
\caption{Imaging results for the box target.
(a)
Result of the proposed IAA-DA algorithm.
(b)
Detection result of the reconstructed image.
(c)
Result of the SRDI algorithm \cite{10475383}.
(d)
Result of the FBP algorithm \cite{3135465484121}.}
\label{Fig_exp_box_results1}
\end{figure*}

\section{Real-World Experiment Results}\label{Section_simulation_experiment}

In this section, real-world experiment results will be presented to verify
imaging algorithm in Section \ref{Section_sdlfjksldkfjlk}
and analysis in Sections \ref{Section_Algorithm_Performance}.
Specifically,
real-world experiments are conducted based on our mmWave platform,
as shown in Fig. \ref{Fig_exp_box_results} (a).
In such a platform,
the transmit antenna and the receive antenna are separated by $0.15\text{m}$,
such that self-interference can be well mitigated.
For the transmit signals,
we set the carrier frequency as $29 \, \text{GHz}$ at the mmWave band,
the number of sub-carriers as $N=1024$,
the sub-carrier spacing as $\Delta f=97.6 \, \text{kHz}$,
and the time interval between adjacent OFDM symbols used for imaging as $T_0 = 5 \, \text{ms}$.
Therefore,
the signal wavelength is $\lambda=1.03 \, \text{cm}$
and
the total signal bandwidth is 100 MHz,
which implies that the range resolution is $1.5 \, \text{m}$
and cannot provide any resolution information for imaging.

Moreover,
in this section,
we will only provide the experiment results for ISAR imaging.
As shown in Figs. \ref{Fig_exp_box_results} (b) and (c),
the transmit and receive antennas are static,
while the targets are placed on a rotating platform.
Specifically,
the spin time of the targets over 360 degrees is $20 \, \text{s}$,
which suggests that
the angular velocity
is $\omega = \frac{2 \pi}{20}  \,  {\text{rad}}/{\text{s}}$
and the number of OFDM symbols
for imaging is $L = \frac{2\pi}{\omega T_0} = 4000$.
Two scenarios are considered for the imaging experiments.
The first utilizes a cardboard box of
$0.35\text{m} \times 0.41\text{m} \times 0.33\text{m}$
as the target, as shown in Fig. \ref{Fig_exp_box_results} (b).
The second scenario,
as shown in Fig. \ref{Fig_exp_box_results} (c),
employs four cylindrical targets
covered with aluminized paper to increase the backscattered
power.
Additionally,
we set
the number of the
azimuth angle centers,
as defined in Section \ref{Section_Signal_Model_Approximation},
to
$K = 400$
and set the small-angle parameter to
$\delta_{SA} = 2{}^\circ  \frac{\pi}{180{}^\circ} \, \text{rad}$.
Given the above system parameters,
the benchmark resolution
is calculated as
${{\delta }_{\text{benchmark}}}=9.56\text{cm}$
according to Proposition \ref{proposition_resolution}.
By leveraging the super-resolution algorithm,
the total resolution can be further enhanced to achieve superior resolution,
i.e., ${{\kappa }_{I}}{{\kappa }_{S}} < 1$.

\begin{figure*}[!t]
\centering
\subfigure[]
{\includegraphics[width=1.65in]{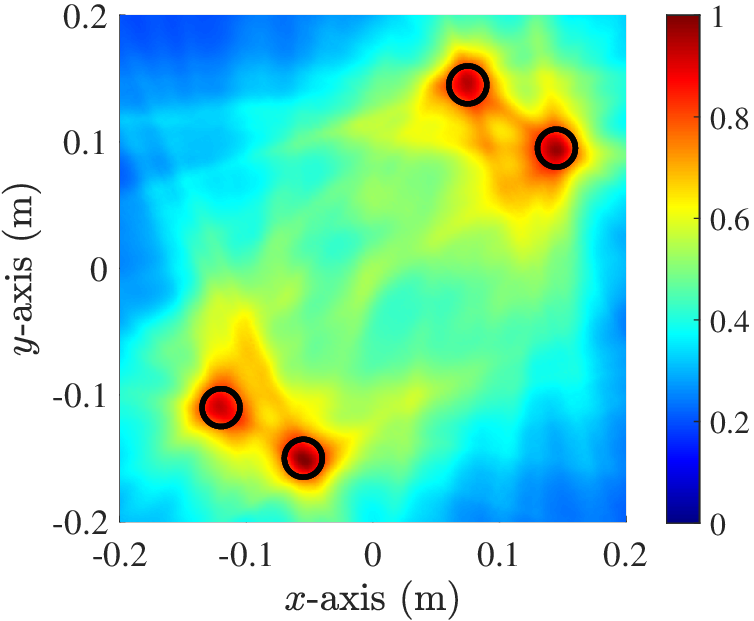}} \ \
\subfigure[]
{\includegraphics[width=1.65in]{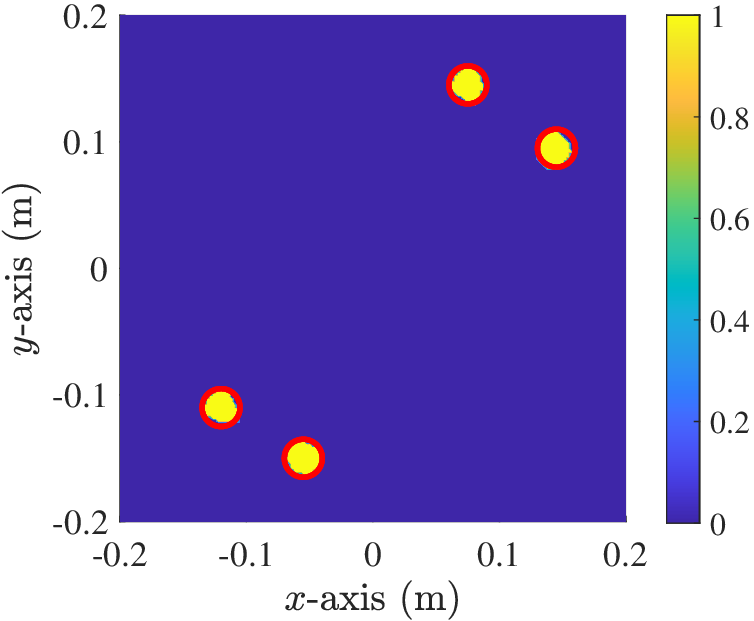}} \ \
\subfigure[]
{\includegraphics[width=1.65in]{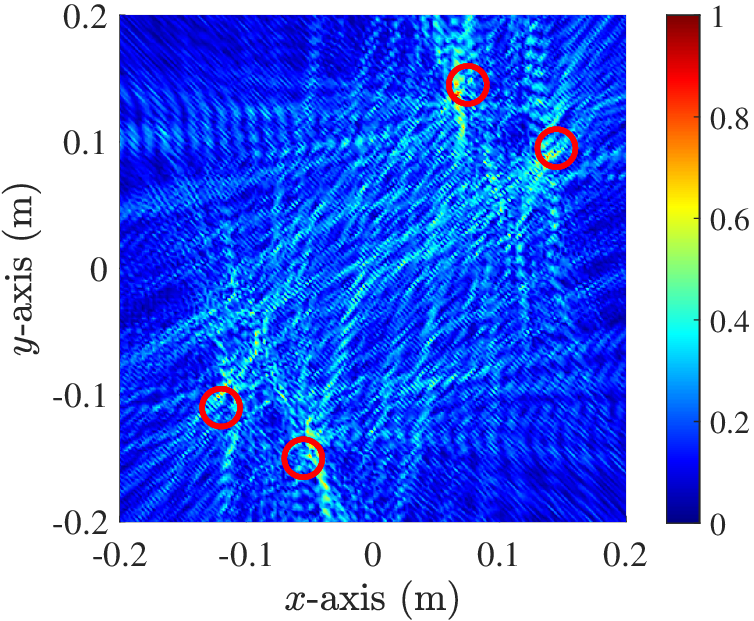}} \ \
\subfigure[]
{\includegraphics[width=1.65in]{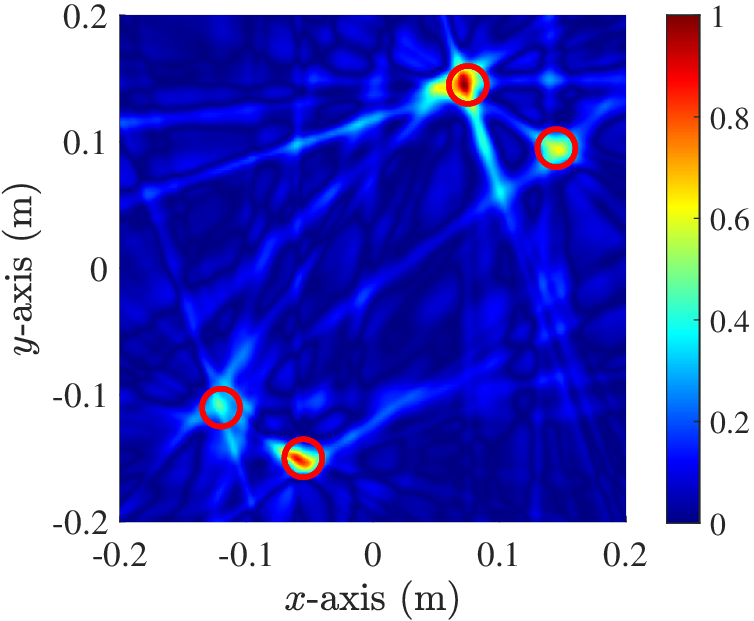}}
\caption{Imaging results for four cylindrical targets.
(a)
Result of the proposed IAA-DA algorithm.
(b)
Detection result of the reconstructed image.
(c)
Result of the SRDI algorithm \cite{10475383}.
(d)
Result of the FBP algorithm \cite{3135465484121}.}
\label{Fig_exp_box_results2}
\end{figure*}

\begin{figure*}[!t]
\centering
\subfigure[]
{\includegraphics[width=1.35in]{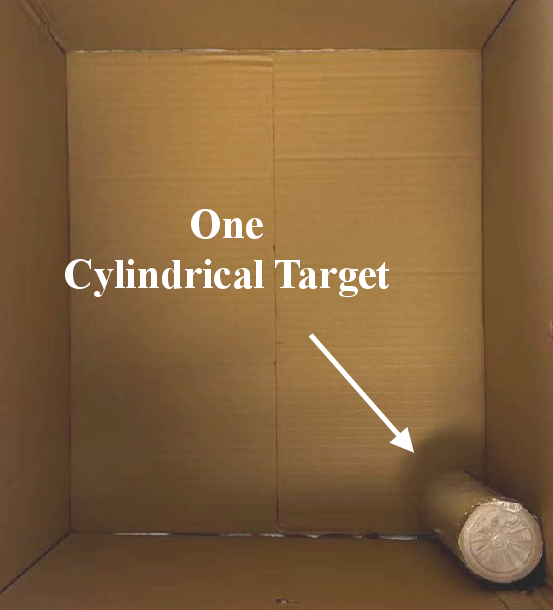}} \ \ \
\subfigure[]
{\includegraphics[width=1.65in]{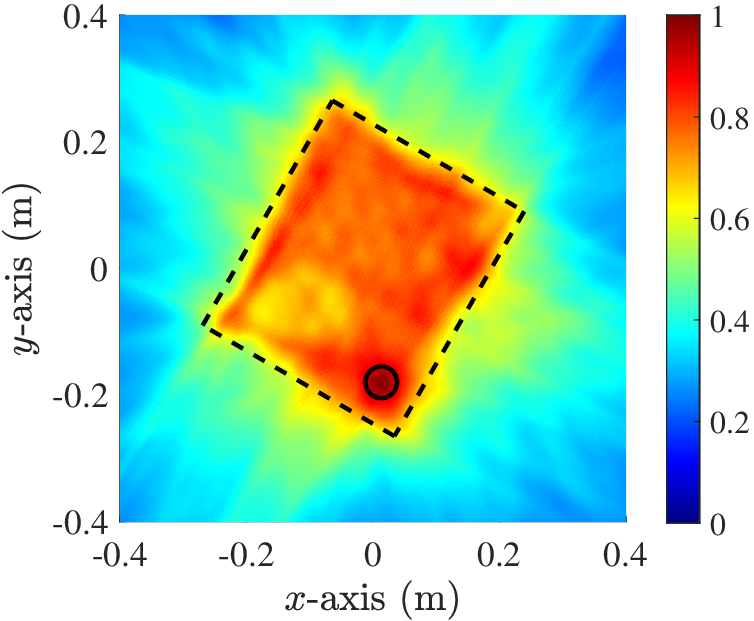}} \ \ \
\subfigure[]
{\includegraphics[width=1.35in]{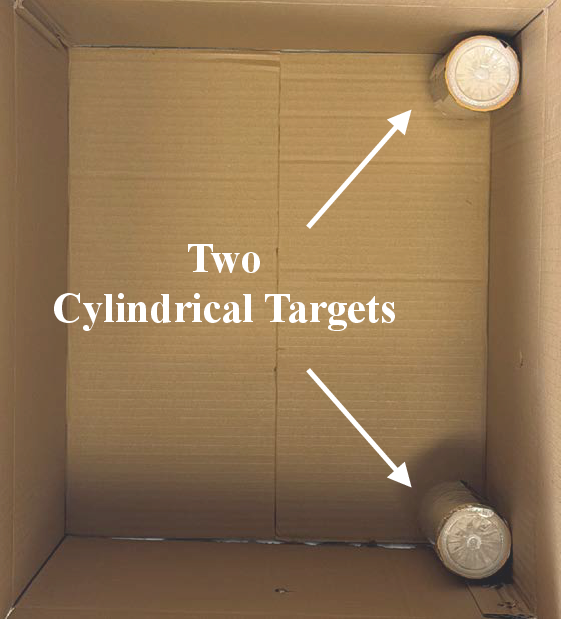}} \ \ \
\subfigure[]
{\includegraphics[width=1.65in]{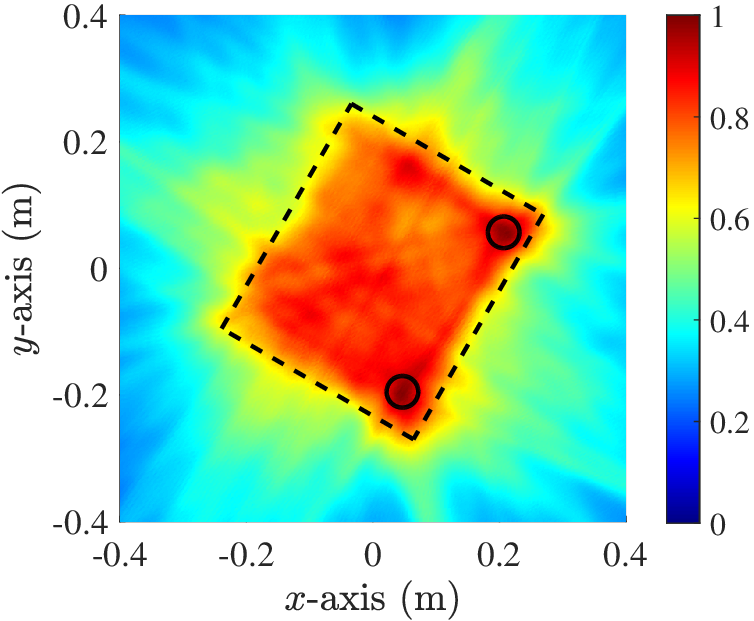}}
\caption{Results of the penetration imaging.
(a) Scenario 1 with one cylindrical target within the box.
(b) Result of the proposed IAA-DA algorithm for scenario 1.
(c) Scenario 2 with two cylindrical target within the box.
(d) Result of the proposed IAA-DA algorithm for scenario 2.}
\label{Fig_Imaging_Result_rectangle_901sdfds80270360}
\end{figure*}

Fig. \ref{Fig_exp_box_results1}
shows the experiment results
for a single box target.
The
image reconstructed by the proposed
IAA-DA algorithm is shown in
Fig. \ref{Fig_exp_box_results1} (a),
exhibiting sharply defined
edges and high overall quality.
After applying a
$-3 \, \text{dB}$ threshold
to the reconstructed image,
the resulting detection map in
Fig. \ref{Fig_exp_box_results1} (b)
demonstrates a high degree of fidelity to the physical dimensions of the box.
To highlight the superiority of the proposed algorithm, two comparative methods are also evaluated.
The first one is the SRDI algorithm \cite{10475383},
where the target is characterized by an isotropic scattering function.
As shown in
Fig. \ref{Fig_exp_box_results1} (c),
since physical targets typically exhibit anisotropic properties, this model mismatch causes the scatterer energy to smear across the entire image rather than concentrate at the true positions.
The second one is the FBP algorithm \cite{3135465484121},
specifically devised for CT imaging,
which also models an isotropic function for targets.
The FBP result
in
Fig. \ref{Fig_exp_box_results1} (d)
exhibits high sidelobes outside the target region
and attenuated scatterer energy within it.
These comparisons validate the effectiveness and robustness of the proposed algorithm in practical imaging applications.

Fig. \ref{Fig_exp_box_results2}
presents the imaging results
for a multi-target scenario involving four cylindrical targets.
As shown in Figs. \ref{Fig_exp_box_results2} (a) and (b),
the proposed IAA-DA algorithm successfully
resolves targets separated by only $7.6 \, \text{cm}$,
thereby demonstrating its capability for
\emph{centimeter-level imaging resolution}.
Moreover,
according to Proposition \ref{proposition_resolution},
this result yields ${{\kappa }_{I}}{{\kappa }_{S}} = 0.8 <1$,
indicating that the IAA algorithm effectively enhances
the imaging resolution.
Notably, all four targets exhibit significant energy and are reliably detected.
In contrast, the performance of the SRDI and FBP algorithms is considerably lower.
As shown in Fig. \ref{Fig_exp_box_results2} (c),
the SRDI algorithm suffers from severe energy leakage,
making it difficult to distinguish the four targets.
Similarly,
the FBP algorithm in Fig. \ref{Fig_exp_box_results2} (d)
leads to energy degradation.
Only two of the four targets maintain sufficient signal intensity,
while the remaining two are barely discernible due to their low energy levels.
These results validate the robustness and effectiveness of the proposed algorithm in multi-target imaging scenarios.

Fig. \ref{Fig_Imaging_Result_rectangle_901sdfds80270360}
demonstrates the
penetration imaging capability
of our proposed BI imaging scheme.
The imaging scenarios are given in
Figs. \ref{Fig_Imaging_Result_rectangle_901sdfds80270360} (a) and (c),
while the corresponding imaging results are presented in
Figs. \ref{Fig_Imaging_Result_rectangle_901sdfds80270360} (b) and (d),
respectively.
As observed from the results,
the ISAC system successfully images the cylindrical targets inside the box,
and the sizes of the reconstructed targets closely match their actual sizes.
These results verify that mmWave communication signals can penetrate the box and enable accurate penetration imaging, which opens up new possibilities for non-contact, high-precision internal imaging of objects in the future.

\begin{figure}[!t]
\centering
\includegraphics[width=2.83in]{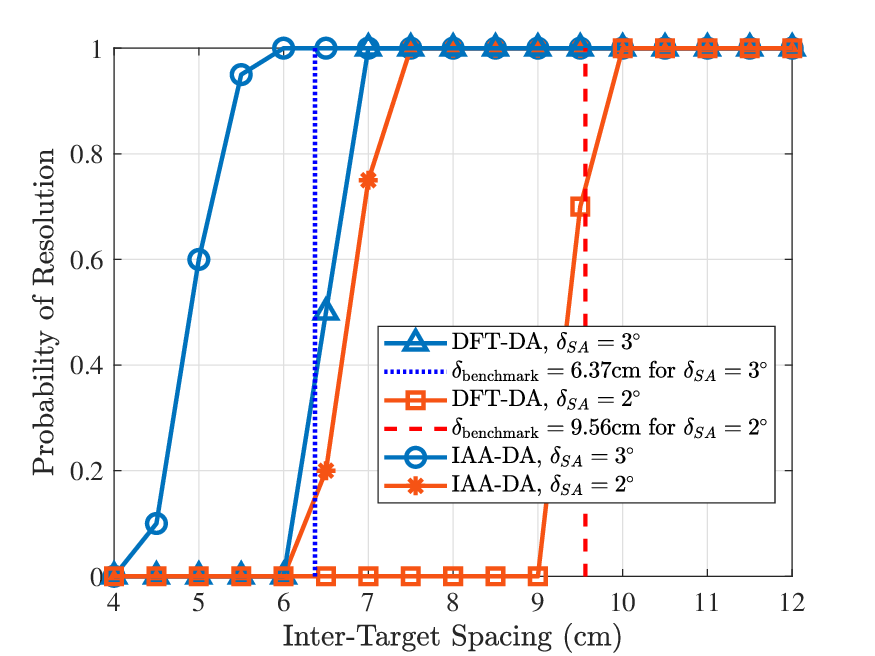}
\caption{Experimental results of the imaging resolution.}
\label{Fig_exp_resolution}
\end{figure}

Fig. \ref{Fig_exp_resolution}
presents the
experimental resolution probability
as a function of
inter-target spacing.
To calculate this probability,
two point targets
are used
and 20 independent trials are conducted for each spacing.
In particular,
the two point targets can be approximately treated as isotropic scatterers,
i.e., ${{\kappa }_{I}} \approx 1$.
Therefore,
we have
${{\delta }_{\text{DFT-DA}}}={{\kappa }_{I}}\delta _{\text{benchmark}} \approx \delta _{\text{benchmark}}$,
meaning that
the benchmark resolution can be used as a reference to assess the
resolution of the DFT-DA algorithm.
As observed, the experimental results of the DFT-DA algorithm for different $\delta_{SA}$
align well with
the theoretical predictions,
which confirms the correctness of the derivation.
In addition,
the results indicate that a larger $\delta_{SA}$
leads to improved imaging resolution.
However,
$\delta_{SA}$
should be chosen with care based on target characteristics and should remain relatively small to maintain angular coherence.
Fig. \ref{Fig_exp_resolution} also compares the resolution probabilities of the DFT-DA and IAA-DA algorithms.
Specifically,
at $\delta_{SA} = 2{}^\circ$,
the IAA-DA algorithm achieves a 1.36-fold improvement over the DFT-DA algorithm,
and a 1.38-fold improvement at $\delta_{SA} = 3{}^\circ$.
These results demonstrate the superiority of the IAA algorithm in resolution enhancement.

\section{Conclusion}\label{Section_conclusion}

In this paper,
we proposed a novel BI imaging scheme
applicable to both SAR and ISAR scenarios.
Different from conventional SAR and ISAR imaging scheme,
under which Gigahertz-level bandwidth is necessary for high-resolution range estimation in one dimension,
our proposed scheme merely requires Doppler information,
and can
work with current cellular signals with Megahertz-level bandwidth.
We also derived the imaging resolution for the proposed BI imaging scheme.
To the best of our knowledge,
our paper is the first to provide real-world experiments that rely on cellular signals for high-resolution imaging.
Meanwhile, we also conducted experiments to demonstrate the potential of utilizing millimeter-wave signals for accurate imaging of objects inside a box.

\begin{appendices}

\section{Proof of Proposition \ref{proposition_UIR}}\label{Appendix_proposition_UIR}

Note that the signal model in (\ref{sdlsdkdkjci})
is a typical multi-tone signal
at each azimuth angle $\Phi_k$.
In particular,
$-x\sin {{\Phi }_{k}}+y\cos {{\Phi }_{k}}$
is related to the position of the scatterer
and is the unknown parameter to be estimated.
Hence,
based on the Nyquist-Shannon
sampling theorem,
the maximum unambiguous frequency region
at the azimuth angle $\Phi_k$
is given by
\begin{align}
\nonumber \left| -x\sin {{\Phi }_{k}}+y\cos {{\Phi }_{k}} \right| & =\sqrt{{{x}^{2}}+{{y}^{2}}}\left|\cos \left( {{\Phi }_{k}}+{{\varphi }_{x,y}} \right) \right|\\
 &\le \frac{\lambda }{4\omega {{T}_{0}}}\sqrt{1+\frac{{{H}^{2}}}{{{R}^{2}}}}, \ \forall k,
\end{align}
where
${{\varphi }_{x,y}}=\arccos \frac{y}{\sqrt{{{x}^{2}}+{{y}^{2}}}}$.
For the above relation to hold for arbitrary scatterer locations and observation angles, the following constraint is required
\begin{align}
\sqrt{x^{2}+y^{2}}\le \frac{\lambda }{4{{\omega }}{{T}_{0}}}\sqrt{1+\frac{H^{2}}{R^{2}}}.
\end{align}
Proposition \ref{proposition_UIR}
is thus proved.

\section{Proof of the Approximation in (\ref{sdflkjslfkjsldkjf})}\label{Appendix_sdflkjslfkjsldkjf}

First, we consider the case of two scatterers.
Let
${{\tilde{f}}_{2D,n}}\left( {{x}},{{y}};\Phi  \right)={{\tilde{f}}_{n}}\left( \Phi  \right)\delta \left( x-{{x}_{n}},y-{{y}_{n}} \right)$
denote the scattering function of the $n$-th scatterer observed at the azimuth angle $\Phi_k$,
where $n=0,1$,
$\left(x_n, y_n\right)$ is the position of the $n$-th scatterer,
and $\delta \left( x, y \right)$ is the Dirac delta function.
Then, (\ref{dlijfldkjf})
can be written as
\begin{align}\label{sdfjskldkfjslkjf}
 {{g}}^{+}  \left( {D_i}, \Phi_k  \right)    =  \left| \sum\limits_{n=0}^{1}{    {{{\tilde{f}}}_{n}}\left( {{\Phi }_{k}} \right)    \chi     \left( {D_i}  -    {{D_f}}\left( {{x}_{n}},{{y}_{n}};{{\Phi }_{k}} \right) \right)} \right|,  \forall k, i.
\end{align}
Note that the sidelobes of
the ambiguity function
${{\chi }}\left( D \right)$
are generally too low to be retained
after applying the IAA algorithm.
Therefore,
outside the mainlobe region $D\in \left( -{{D}_{\text{ML}}},{{D}_{\text{ML}}} \right)$,
${{\chi }}\left( D \right)$ can be approximated as zero,
where
${{D}_{\text{ML}}}$
is the mainlobe range.
Hence,
when the interval
between the Doppler frequencies of two scatterers satisfies
$\left| {{D_f}}\left( {{x}_{1}},{{y}_{1}};{{\Phi }_{k}} \right)-{{D_f}}\left( {{x}_{0}},{{y}_{0}};{{\Phi }_{k}} \right) \right|\ge 2{{D}_{\text{ML}}}$,
(\ref{sdfjskldkfjslkjf})
can be readily approximated as
\begin{align}\label{sdfjskldkfjslksdfsdfjf}
\nonumber {{g}}^{+}  \left( {D_i}, \Phi_k  \right)  \approx \,&    \sum\limits_{n=0}^{1}{    \left| {{{\tilde{f }}}_{n}}    \left( {{\Phi }_{k}} \right) \right|{{\chi }^{+}}    \left( {D_i}   -    {{D_f}}    \left( {{x}_{n}},{{y}_{n}};{{\Phi }_{k}} \right) \right)},\\
& \forall k, i,
\end{align}
where ${{\chi }}^{+}\left( D \right)=\left|{{\chi }}\left( D \right)\right|$.
When satisfying the relationship of $\left| {{D_f}}\left( {{x}_{1}},{{y}_{1}};{{\Phi }_{k}} \right)-{{D_f}}\left( {{x}_{0}},{{y}_{0}};{{\Phi }_{k}} \right) \right| < 2{{f}_{\text{ML}}}$,
the two scatterers become unresolvable
and can generally be regarded as one scatterer.
Define ${{D}}\left( {{\Phi }_{k}} \right)\triangleq {{D_f}}\left( {{x}_{0}},{{y}_{0}};{{\Phi }_{k}} \right)\approx {{D_f}}\left( {{x}_{1}},{{y}_{1}};{{\Phi }_{k}} \right)$.
We approximate
(\ref{sdfjskldkfjslkjf})
as
\begin{align}\label{sdfjsjhlkldkfjslksdfsdfjf}
{{g}}^{+}\left( {D_i}, \Phi_k  \right)  \approx   \left| \sum\limits_{n=0}^{1}{{{{\tilde{f }}}_{n}}\left( {{\Phi }_{k}} \right)} \right|{{\chi }^{+}}\left( {D_i}-{{D_f}}\left( {{\Phi }_{k}} \right) \right), \ \forall k, i.
\end{align}
Finally,
combining the results from
(\ref{sdfjskldkfjslksdfsdfjf}) and (\ref{sdfjsjhlkldkfjslksdfsdfjf}),
we obtain
\begin{align}
{{g}}^{+}  \left( {D_i}, \Phi_k  \right)  \approx    \sum\limits_{n=0}^{1}{\tilde{f }_{n}^{+}\left( {{\Phi }_{k}} \right){{\chi }^{+}}     \left( {D_i}    -  {{D_f}}    \left( {{x}_{n}},{{y}_{n}};{{\Phi }_{k}} \right) \right)}, \forall k, i,
\end{align}
where
$\tilde{f }_{n}^{+}\left( {{\Phi }_{k}} \right)\in \mathbb{R}$
is the equivalent scattering function.
The above results
can be readily extended to multi-scatterer cases
using the same method above.
Therefore, the approximation in (\ref{sdflkjslfkjsldkjf})
is proved.

\section{Proof of Proposition \ref{proposition_resolution}}\label{Appendix_proposition_resolution}
We begin by deriving the resolution benchmark,
i.e., $\delta_{\text{benchmark}}$.
This benchmark is established
under the assumption of an isotropic equivalent scattering function
and is computed by employing the DFT-DA imaging algorithm.
Then, we prove that
the anisotropic equivalent scattering function considered in this paper
leads to worse resolution performance,
i.e., $\kappa_I \ge 1$.
Additionally, for normalization, we define $\text{PSF}(0,0) = 1$ in accordance with Definition \ref{Definition_resolution}.

For an
isotropic equivalent scattering function,
denoted by $\tilde{A }_{2D}^{+}$,
according to (\ref{Dfsdlfksl}) and the normalization condition,
we have
$\text{PSF}_\text{iso}\left( 0,0 \right)=\frac{1}{2\pi }\int_{0}^{2\pi }{\tilde{A }_{2D}^{+}d\Phi } = 1$,
which means $\tilde{A }_{2D}^{+} = 1$.
Therefore, the PSF in (\ref{jsflkdjlfskjlkd})
can be expressed as
\begin{align}\label{dsflksdjhflksj}
  & \text{PSF}_\text{iso}\left( x,y;\beta  \right)  =  \frac{1}{2\pi }  \int_{0}^{2\pi }  {{{\chi }^{+}}  \left( {{\beta }^{-1}}  \sqrt{{{x}^{2}}  +  {{y}^{2}}}   \cos \Phi  \right)   d\Phi },
\end{align}
where $\beta$ is defined in (\ref{jsflkdjlfskjlkd}).
Note that
\begin{align}\label{sdlfksjlfkj}
\text{PSF}_\text{iso}\left( a\sin  \alpha ,a\cos  \alpha ;  \beta  \right)  =  \frac{1}{2\pi }  \int_{0}^{2\pi }{{{\chi }^{+}}\left( {{\beta }^{-1}}a\cos  \Phi  \right)  d\Phi },
\end{align}
where
$a\in {{\mathbb{R}}_{\ge 0}}$,
and
$\alpha\in \left[0, 2 \pi\right]$ denotes the inclination angle of the slice.
The right-hand side of (\ref{sdlfksjlfkj})
is unrelated to $\alpha$.
Therefore,
slices of the PSF passing through the origin
exhibit identical spectrums at arbitrary inclination angles,
which means
that
the $-3\text{dB}$ contour of the normalized PSF,
$\zeta\left(\alpha\right)$,
as defined in Definition \ref{Definition_resolution},
remains constant with respect to the angle $\alpha$,
i.e., ${{\zeta}_{\text{iso}}}\triangleq \zeta\left( \alpha  \right)$.

Next, we use the resolution of the DFT-DA algorithm
as a reference
benchmark as it provides the worst achievable resolution.
Note that $\zeta\left(\alpha\right) + \zeta\left(\alpha + \pi\right) = 2{{\zeta}_{\text{iso}}}$
is the imaging resolution
in the case of the isotropic equivalent scattering function.
We need to determine ${{\zeta}_{\text{iso}}}$
such that $\text{PSF}_\text{iso}\left( {{\zeta}_{\text{iso}}}\sin \alpha ,{{\zeta}_{\text{iso}}} \cos \alpha ;\beta  \right) = -3\text{dB}$.
Here, we first
derive the resolution for the case where $\beta = 1$,
and then extend the result to general form in (\ref{sdlfksjlfkj}).
Note that
the resolution of
${{\chi }_{\text{DFT}}^{+}}\left( {\cdot} \right)$
is $\frac{1}{\left( 2M+1 \right){{T}_{0}}}$
according to
the principle of the
digital signal processing.
However, it is quite difficult to obtain a precise resolution expression.
Based on Definition \ref{Definition_resolution},
numerical analysis reveals that when $\beta = 1$,
the resolution of the PSF can approximately be expressed as
\begin{align}\label{sdlfjksjlfkj}
{{\zeta}_{\text{iso}}^{\beta = 1}} \approx \frac{{C_R}/{2}}{\left( 2M+1 \right){{T}_{0}}},
\end{align}
where $C_R \approx 1.29$ is a constant.
Due to
$\text{PSF}_\text{iso}\left( a\sin  \alpha , a \cos  \alpha;1 \right)    =    \text{PSF}_\text{iso}\left( \beta a \sin  \alpha , \beta a \cos  \alpha;\beta  \right)$,
the benchmark resolution
is given by
\begin{align}
\nonumber \delta_{\text{benchmark}}&\triangleq  2{{\zeta}_{\text{iso}}}= 2 \beta {{\zeta}_{\text{iso}}^{\beta = 1}}
\approx \beta \frac{C_R}{\left( 2M+1 \right){{T}_{0}}}  \\
&\approx C_R \frac{\lambda }{4\omega {{T}_{0}}M}\sqrt{1+\frac{{{H}^{2}}}{{{R}^{2}}}}  = C_R\frac{\lambda }{4{{\delta }_{SA}}}\sqrt{1+\frac{{{H}^{2}}}{{{R}^{2}}}}.
\end{align}
Therefore,
the benchmark resolution $\delta_{\text{benchmark}}$
in (\ref{dsyysdsdsfkjsddfsdfslkfj})
is proved.
Because the IAA algorithm provides
super-resolution capability,
the
IAA-DA algorithm yields lower resolution
compared to the DFT-DA algorithm,
meaning that
$\kappa_S \le 1$.

Next, we evaluate the resolution under the anisotropic equivalent scattering function
and focus on $\kappa_I$.
For anisotropic equivalent scattering function,
the PSF in (\ref{jsflkdjlfskjlkd})
can be expressed as
\begin{align}
\nonumber &\text{PSF}\left( a\sin \alpha ,a\cos \alpha ;\beta  \right)\\
& \ \ \ \ \ \ \ \ \    =\frac{1}{2\pi }\int_{0}^{2\pi }{\tilde{A }_{2D}^{+}\left( \Phi -\alpha  \right){{\chi }^{+}}\left( {{\beta }^{-1}}a\cos \Phi  \right)d\Phi }.
\end{align}
We now
define
the function
$h\left(\alpha\right)$
to quantify the difference between the anisotropic and isotropic PSFs along the benchmark contour:
$h\left(\alpha\right) \triangleq  \text{PSF}\left( {{\zeta}_{\text{iso}}}\sin \alpha ,{{\zeta}_{\text{iso}}}\cos \alpha ;\beta  \right) - \text{PSF}_\text{iso}\left( {{\zeta}_{\text{iso}}}\sin \alpha ,{{\zeta}_{\text{iso}}}\cos \alpha ;\beta  \right)$.
Note that the following relationship holds
\begin{align}\label{ewlekrjwkdn}
\nonumber\frac{1}{2\pi }\int_{0}^{2\pi }{h\left( \alpha  \right)d\alpha } = {}& \frac{1}{2\pi }\int_{0}^{2\pi }{\left[ \frac{1}{2\pi }\int_{0}^{2\pi }{\tilde{A }_{2D}^{+}\left( \Phi -\alpha  \right)d\alpha }-1 \right] } \\
& \times {{\chi }^{+}}\left( \frac{{C_R}\cos \Phi }{2\left( 2M+1 \right){{T}_{0}}} \right)d\Phi =0.
\end{align}
This
implies that there exists $\alpha_0 \in [0, 2\pi]$
such that $h\left({\alpha_0}\right) > 0$,
meaning that
$\text{PSF}\left( {{\zeta}_{\text{iso}}}\sin \alpha_0 ,{{\zeta}_{\text{iso}}}\cos \alpha_0 ;\beta  \right) > -3\text{dB}$.
Then,
note that
for any $\alpha \in [0, 2\pi]$,
$\text{PSF}(a\sin \alpha, a\cos \alpha; \beta)$
is a monotonically decreasing function of $a$
over $a \in \left[0, {{\zeta}_{\text{iso}}}\right]$.
Therefore,
the value of $a$ should exceed ${{\zeta}_{\text{iso}}}$
in order for the function to fall to $-3\text{dB}$ required in Definition \ref{Definition_resolution}.
Hence, the anisotropy of the equivalent scattering function
leads to a poor resolution, which means $\kappa_I \ge 1$.
Proposition \ref{proposition_resolution} is thus proved.

\end{appendices}

\bibliographystyle{IEEEtran}
\bibliography{AAA}

@ARTICLE{650078,
  author={Trintinalia, L.C. and Bhalla, R. and Hao Ling},
  journal={IEEE Trans. Antennas Propag.},
  title={Scattering center parameterization of wide-angle backscattered data using adaptive {Gaussian} representation},
  year={Nov. 1997},
  volume={45},
  number={11},
  pages={1664-1668},
  doi={10.1109/8.650078}}

@ARTICLE{805442,
  author={Li Xi and Liu Guosui and Jinlin Ni},
  journal={IEEE Trans. Aerosp. Electron. Syst.},
  title={Autofocusing of {ISAR} images based on entropy minimization},
  year={Oct. 1999},
  volume={35},
  number={4},
  pages={1240-1252},
  doi={10.1109/7.805442}}

@ARTICLE{532283,
  author={Itoh, T. and Sueda, H. and Watanabe, Y.},
  journal={IEEE Trans. Aerosp. Electron. Syst.},
  title={Motion compensation for {ISAR} via centroid tracking},
  year={Jul. 1996},
  volume={32},
  number={3},
  pages={1191-1197},
  doi={10.1109/7.532283}}

@ARTICLE{4497843,
  author={Dehmollaian, Mojtaba and Sarabandi, Kamal},
  journal={IEEE Trans. Geosci. Remote Sens.},
  title={Refocusing Through Building Walls Using Synthetic Aperture Radar},
  year={Jun. 2008},
  volume={46},
  number={6},
  pages={1589-1599},
  doi={10.1109/TGRS.2008.916212}}

@ARTICLE{755021,
  author={Carin, L. and Geng, N. and McClure, M. and Sichina, J. and Lam Nguyen},
  journal={IEEE Antennas Propag. Mag.},
  title={Ultra-wide-band synthetic-aperture radar for mine-field detection},
  year={Feb. 1999},
  volume={41},
  number={1},
  pages={18-33},
  doi={10.1109/74.755021}}

@ARTICLE{9534682,
  author={Tong, Xin and Zhang, Zhaoyang and Wang, Jue and Huang, Chongwen and Debbah, M¨¦rouane},
  journal={IEEE J. Sel. Topics Signal Process.},
  title={Joint Multi-User Communication and Sensing Exploiting Both Signal and Environment Sparsity},
  year={Nov. 2021},
  volume={15},
  number={6},
  pages={1409-1422},
  doi={10.1109/JSTSP.2021.3111432}}

@ARTICLE{10412121,
  author={Baczyk, Marcin Kamil and Samczynski, Piotr and Drozdowicz, Jedrzej and Wielgo, Maciej and Sobolewski, Jakub and Ciesielski, Marek and Julczyk, Jakub and Stasiak, Krzysztof and Pietrzykowski, Grzegorz and Abratkiewicz, Karol and Soszka, Maciej},
  journal={IEEE J. Sel. Topics Appl. Earth Observ. Remote Sens.},
  title={{3-D} High-Resolution {ISAR} Imaging for Noncooperative Air Targets},
  year={2024},
  volume={17},
  number={},
  pages={4194-4207},
  doi={10.1109/JSTARS.2024.3357120}}

@ARTICLE{942570,
  author={Sheen, D.M. and McMakin, D.L. and Hall, T.E.},
  journal={IEEE Trans. Microwave Theory Tech.},
  title={Three-dimensional millimeter-wave imaging for concealed weapon detection},
  year={Sep. 2001},
  volume={49},
  number={9},
  pages={1581-1592},
  doi={10.1109/22.942570}}

@ARTICLE{5420035,
  author={Potter, Lee C. and Ertin, Emre and Parker, Jason T. and Cetin, M¨¹jdat},
  journal={Proc. IEEE},
  title={Sparsity and Compressed Sensing in Radar Imaging},
  year={Jun. 2010},
  volume={98},
  number={6},
  pages={1006-1020},
  doi={10.1109/JPROC.2009.2037526}}

@ARTICLE{10637442,

  author={Hu, Yanmo and Andrew Zhang, J. and Wu, Kai and Deng, Weibo and Jay Guo, Y.},

  journal={IEEE Trans. Commun.},

  title={Anchor Points Assisted Uplink Sensing in Perceptive Mobile Networks},

  year={Feb. 2025},

  volume={73},

  number={2},

  pages={904-920},

  doi={10.1109/TCOMM.2024.3443729}}

@book{CT_boooook,
  author = {J. Hsieh},
  title = {Computed Tomography: Principles, Design, Artifacts, and Recent Advances},
  address = {Bellingham, WA, USA: SPIE, 2009},
}

@ARTICLE{3135465484121,

  author={R. N. Bracewell and A. C. Riddle},

  journal={Astrophys. J.},

  title={Inversion of fan-beam scans in radio astronomy},

  year={Nov. 1967},

  volume={150},

  number={},

  pages={427-434}}

@ARTICLE{5417172,

  author={Yardibi, Tarik and Li, Jian and Stoica, Petre and Xue, Ming and Baggeroer, Arthur B.},

  journal={IEEE Trans. Aerosp. Electron. Syst.},

  title={Source Localization and Sensing: A Nonparametric Iterative Adaptive Approach Based on Weighted Least Squares},

  year={Jan. 2010},

  volume={46},

  number={1},

  pages={425-443},

  doi={10.1109/TAES.2010.5417172}}

@ARTICLE{10851319,

  author={Zhang, Ruiyun and Wang, Zhaolin and Wei, Zhiqing and Liu, Yuanwei and Xiong, Zehui and Feng, Zhiyong},

  journal={ IEEE Trans. Cogn. Commun. Netw.},

  title={{ISAR} Sensing Based on {MUSIC} Algorithm in Integrated Sensing and Communications},

  year={Oct. 2025},

  volume={11},

  number={5},

  pages={3292-3305},

  doi={10.1109/TCCN.2025.3533000}}

@ARTICLE{11288092,
  author={Gao, Junyuan and Zhu, Weifeng and Zhang, Shuowen and Wu, Yongpeng and Cao, Jiannong and Caire, Giuseppe and Liu, Liang},
  journal={IEEE Trans. Wireless Commun.},
  title={Integrated Massive Communication and Target Localization in {6G} Cell-Free Networks},
  year={2026},
  volume={25},
  number={},
  pages={8498-8515},
  doi={10.1109/TWC.2025.3638614}}

@ARTICLE{9724258,

  author={Shi, Qin and Liu, Liang and Zhang, Shuowen and Cui, Shuguang},

  journal={IEEE J. Sel. Areas Commun.},

  title={Device-Free Sensing in {OFDM} Cellular Network},

  year={Jun. 2022},

  volume={40},

  number={6},

  pages={1838-1853},

  doi={10.1109/JSAC.2022.3155543}}

@ARTICLE{10878492,
  author={Gonzalez-Prelcic, Nuria and Tagliaferri, Dario and Keskin, Musa Furkan and Wymeersch, Henk and Song, Lingyang},
  journal={IEEE Veh. Technol. Mag.},
  title={Six Integration Avenues for {ISAC} in {6G} and Beyond},
  year={Mar. 2025},
  volume={20},
  number={1},
  pages={18-39},
  doi={10.1109/MVT.2025.3529403}}

@ARTICLE{10367810,
  author={Dubey, Amartansh and Li, Zan and Murch, Ross},
  journal={IEEE Trans. Antennas Propag.},
  title={Reconciling Radio Tomographic Imaging With Phaseless Inverse Scattering},
  year={Feb. 2024},
  volume={72},
  number={2},
  pages={1837-1849},
  doi={10.1109/TAP.2023.3342827}}

@ARTICLE{10618967,
  author={Li, Zan and Dubey, Amartansh and Shen, Shanpu and Kundu, Neel Kanth and Rao, Junhui and Murch, Ross},
  journal={IEEE Trans. Wireless Commun.},
  title={Radio Tomographic Imaging With Reconfigurable Intelligent Surfaces},
  year={Nov. 2024},
  volume={23},
  number={11},
  pages={15784-15797},
  doi={10.1109/TWC.2024.3433011}}

@ARTICLE{10947014,
  author={Li, Zan and Dubey, Amartansh and Murch, Ross},
  journal={IEEE Open J. Antennas Propag.},
  title={The Impact of Phase Information on the Imaging Performance of the Extended Phaseless {Rytov} Approximation},
  year={Aug. 2025},
  volume={6},
  number={4},
  pages={989-1000},
  doi={10.1109/OJAP.2025.3556848}}

@ARTICLE{5374407,
  author={Wilson, Joey and Patwari, Neal},
  journal={IEEE Trans. Mobile Comput.},
  title={Radio Tomographic Imaging with Wireless Networks},
  year={May 2010},
  volume={9},
  number={5},
  pages={621-632},
  doi={10.1109/TMC.2009.174}}

@ARTICLE{752218,
  author={Sato, T.},
  journal={IEEE Trans. Geosci. Remote Sens.},
  title={Shape estimation of space debris using single-range {Doppler} interferometry},
  year={Mar. 1999},
  volume={37},
  number={2},
  pages={1000-1005},
  doi={10.1109/36.752218}}

@ARTICLE{10475383,
  author={Huang, Yixuan and Yang, Jie and Wen, Chao-Kai and Jin, Shi},
  journal={IEEE Trans. Commun.},
  title={{RIS}-Aided Single-Frequency {3D} Imaging by Exploiting Multi-View Image Correlations},
  year={Aug. 2024},
  volume={72},
  number={8},
  pages={5003-5018},
  doi={10.1109/TCOMM.2024.3379351}}

@ARTICLE{4502060,
  author={Ausherman, Dale A. and Kozma, Adam and Walker, Jack L. and Jones, Harrison M. and Poggio, Enrico C.},
  journal={IEEE Trans. Aerosp. Electron. Syst.},
  title={Developments in Radar Imaging},
  year={Jul. 1984},
  volume={AES-20},
  number={4},
  pages={363-400},
  doi={10.1109/TAES.1984.4502060}}

@ARTICLE{10663814,
  author={Liu, Liang and Zhang, Shuowen and Cui, Shuguang},
  journal={IEEE Commun. Mag.},
  title={Leveraging a Variety of Anchors in Cellular Network for Ubiquitous Sensing},
  year={Sep. 2024},
  volume={62},
  number={9},
  pages={98-104},
  doi={10.1109/MCOM.001.2300632}}

@ARTICLE{10948152,
  author={Zhu, Weifeng and Zhang, Shuowen and Liu, Liang},
  journal={IEEE Trans. Wireless Commun.},
  title={Joint Transmission and Compression Optimization for Networked Sensing With Limited-Capacity Fronthaul Links},
  year={Aug. 2025},
  volume={24},
  number={8},
  pages={6643-6657},
  doi={10.1109/TWC.2025.3555005}}

@ARTICLE{11079818,
  author={Hu, Yanmo and Wu, Kai and Zhang, J. Andrew and Deng, Weibo and Jay Guo, Y.},
  journal={IEEE Trans. Wireless Commun.},
  title={Cross-Frequency Sensing in Bistatic {ISAC} Systems},
  year={Jan. 2026},
  volume={25},
  number={},
  pages={681-697},
  doi={10.1109/TWC.2025.3585944}}

@ARTICLE{4330we23965,
  author={C. C. Aleksoff and C. R. Christensen},
  journal={Appl. Opt.},
  title={Holographic {Doppler} imaging of rotating objects},
  year={Jan. 1975},
  volume={14},
  number={1},
  pages={134-141},
  doi={}}

@ARTICLE{6153063,
  author={Zhuge, Xiaodong and Yarovoy, Alexander G.},
  journal={IEEE Trans. Image Process.},
  title={Three-Dimensional Near-Field {MIMO} Array Imaging Using Range Migration Techniques},
  year={Feb. 2012},
  volume={21},
  number={6},
  pages={3026-3033},
  doi={10.1109/TIP.2012.2188036}}

@misc{gao2025covariancebasedimagingmultiviewfusion,
      title={Covariance-based Imaging and Multi-View Fusion for Networked Sensing},
      author={Junyuan Gao and Weifeng Zhu and Yanmo Hu and Shuowen Zhang and Jiannong Cao and Yongpeng Wu and Giuseppe Caire and Liang Liu},
      year={2025},
      eprint={2511.14490},
      archivePrefix={arXiv},
      primaryClass={eess.SP},
      url={https://arxiv.org/abs/2511.14490},
}

@ARTICLE{6504845,
  author={Moreira, Alberto and Prats-Iraola, Pau and Younis, Marwan and Krieger, Gerhard and Hajnsek, Irena and Papathanassiou, Konstantinos P.},
  journal={IEEE Geosci. Remote Sens. Mag.},
  title={A tutorial on synthetic aperture radar},
  year={Mar. 2013},
  volume={1},
  number={1},
  pages={6-43},
  doi={10.1109/MGRS.2013.2248301}}

@ARTICLE{4373378,
  author={Meta, Adriano and Hoogeboom, Peter and Ligthart, Leo P.},
  journal={IEEE Trans. Geosci. Remote Sens.},
  title={Signal Processing for {FMCW} {SAR}},
  year={Nov. 2007},
  volume={45},
  number={11},
  pages={3519-3532},
  doi={10.1109/TGRS.2007.906140}}

@BOOK{ArraySignalProcessing,

  author={Harry L. Van Trees},

  title={Optimum Array Processing. Detection, Estimation, and Modulation Theory, Part IV},
  
  year = {},
  
  publisher={},

  address = {John Wiley \& Sons, Inc., New York, USA, 2002},

}

@ARTICLE{317861,
  author={Mathews, C.P. and Zoltowski, M.D.},
  journal={IEEE Trans. Signal Process.},
  title={Eigenstructure techniques for {2-D} angle estimation with uniform circular arrays},
  year={Sep. 1994},
  volume={42},
  number={9},
  pages={2395-2407},
  doi={10.1109/78.317861}}

@INPROCEEDINGS{yanmo_conference,
  author={Hu, Yanmo and Zhang, Shuowen and Murch, Ross and Liu, Liang},
  booktitle={Proc. IEEE Int. Conf. Commun. (ICC)},
  title={Bandwidth-Independent Imaging in {6G} Integrated Sensing and Communication Systems},
  year={2026},
  volume={},
  number={},
  pages={}}
\vfill

\end{document}